\providecommand{\MR}{\relax\ifhmode\unskip\space\fi MR }
\providecommand{\href}[2]{#2}
\long\def\symbolfootnote[#1]#2{\begingroup%
\def\thefootnote{\fnsymbol{footnote}}\footnote[#1]{#2}\endgroup}
\newtheorem{theorem}{Theorem}[section]
\newtheorem{prop}[theorem]{Proposition}
\newtheorem*{theorem*}{Theorem}
\newtheorem{lemma}[theorem]{Lemma}
\newtheorem{axiomsFF}[theorem]{Form Factor Axioms}
\newcommand\beq{\begin{equation}}
\newcommand\enq{\end{equation}}
\newcommand\bem{\begin{multline}}
\newcommand\enm{\end{multline}}
\def\beqa{\begin{eqnarray}}
\def\eeqa{\end{eqnarray}}
\def\ba{\begin{array}}
\def\ea{\end{array}}
\newcommand{\f}[2]{{\ensuremath{%
    \mathchoice%
    {\dfrac{#1}{#2}}
    {\dfrac{#1}{#2}}
    {\frac{#1}{#2}}
    {\frac{#1}{#2}}
}}}
\newcommand{\tf}[2]{\ensuremath{#1/#2}}
\def\be{\beta}
\def\ga{\gamma}
\def\Ga{\Gamma}
\def\de{\delta}
\def\eps{\epsilon}
\def\la{\lambda}
\def\sg{\sigma}
\def\ups{\upsilon}
\def\vp{\varphi}
\newcommand{\mc}[1]{\ensuremath{\mathcal{#1}}}
\newcommand{\mf}[1]{\ensuremath{\mathfrak{#1}}}
\newcommand{\msc}[1]{\ensuremath{\mathscr{#1}}}
\newcommand{\bs}[1]{\ensuremath{\boldsymbol{#1}}}
\DeclareFontFamily{OT1}{pzc}{}
\DeclareFontShape{OT1}{pzc}{m}{it}{<-> s * [1.10] pzcmi7t}{}
\DeclareMathAlphabet{\mathpzc}{OT1}{pzc}{m}{it}
\def \ii{ \mathrm i}
\newcommand{\ov}[1]{\ensuremath{\overline{#1}}}
\newcommand{\wt}[1]{\ensuremath{\widetilde{#1}}}
\newcommand{\wh}[1]{\ensuremath{\widehat{#1}}}
\newcommand{\Int}[2]{\ensuremath{\int\limits_{#1}^{#2}}}
\newcommand{\Fint}[2]{\ensuremath{\fint\limits_{#1}^{#2}}}
\newcommand{\sul}[2]{\ensuremath{\sum\limits_{#1}^{#2}}}
\newcommand{\pl}[2]{\ensuremath{\prod\limits_{#1}^{#2}}}
\newcommand{\R}{\ensuremath{\mathbb{R}}}
\newcommand{\Cx}{\ensuremath{\mathbb{C}}}
\newcommand{\Dp}[1]{\ensuremath{\partial_{#1}}}
\newcommand{\ex}[1]{\ensuremath{\e{e}^{#1}}}
\newcommand{\op}[1]{ \boldsymbol{ \texttt{#1} } }
\newcommand{\dd}{\mathrm{d}}
\newcommand{\e}[1]{\ensuremath{\mathrm{#1}}}
\newcommand{\intff}[2]{\ensuremath{ [  #1 \,; #2 ] }}
\newcommand{\intoo}[2]{\ensuremath{ ]  #1 \,; #2 [ }}
\newcommand{\intn}[2]{\ensuremath{[\![ \, #1 \,;\, #2 \,]\!]}}
\newcommand{\widesim}[2][1.5]{
  \mathrel{\underset{#2}{\scalebox{#1}[1]{$\sim$}}}
}
\begin{document}

\begin{center}
\begin{LARGE}
{\bf Bootstrap approach to  1+1 dimensional integrable quantum field theories: the case of the Sinh-Gordon model}
\end{LARGE}

\vspace{1cm}

\vspace{4mm}
{\large Karol K. Kozlowski \footnote{e-mail: karol.kozlowski@ens-lyon.fr}}%
\\[1ex]
Univ Lyon, ENS de Lyon, Univ Claude Bernard Lyon 1, CNRS, Laboratoire de Physique, F-69342 Lyon, France \\[2.5ex]

\par 

\end{center}

\vspace{40pt}


1+1 dimensional integrable quantum field theories correspond to a sparse subset of quantum field theories where the calculation of 
physically interesting observables can be brought to explicit, closed and manageable expressions thanks to the factorisability 
of the S matrices which govern the scattering in these models. In particular,  the correlation functions  
are expressed in terms of explicit series of multiple integrals, this non-perturbatively for all values of the 
coupling. However, the question of convergence of these series, and thus the mathematical well-definiteness of these correlators, 
is mostly open. This paper reviews the overall setting used to formulate such models and discusses  
the recent progress relative to solving the convergence issues in the case of the 1+1 dimensional massive integrable Sinh-Gordon quantum field theory.

\vspace{40pt}

\tableofcontents

\section{Introduction}

\subsection{Scattering matrices for quantum integrable field theories}

It was discovered in the early XX century that the description of matter at low-scales demands to wave-off some of the existing at the time 
paradigms governing the motion and very structure of particles in interactions. This led to the development of the theory of relativity on the one hand, and 
quantum mechanics on the other. In the latter setting, the state of a physical system is described by a vector, the wave function, belonging to some Hilbert space 
and supposed to encapsulate all the physical degrees of freedom of that system. 
On the classical level, the time evolution of 
particles' momenta and positions is governed by a set of generically non-linear ordinary differential equations which can be written in the form of Hamilton's equations. 
In its turn, the time evolution of a wave function is governed by a first order ordinary linear differential equation driven by the Hamiltonian operator.  
This operator is obtained through a quantisation procedure: its symbol is given by the classical Hamiltonian of the system 
or, said differently, it is obtained from the classical Hamiltonian upon replacing the classical momenta and positions by operators.
While the success of the approach was astonishing relatively to the amount of experiments which could
have been explained, soon after the early development of the theory it became clear that in order to describe physics at even smaller scales 
or higher energies, one needs to develop a quantum theory of fields which would bring together the quantum and relativistic features in the setting of uncountably many degrees of freedom. 
 In loose words, such a theory would be reached by producing operator valued generalised functions, \textit{viz}. formal kernels of
distributions, depending on the space-time coordinates which would satisfy analogues of non-linear, relativistically invariant, evolution equations arising in classical field theory. 
While it was rather straightforward to construct the quantum theory of the free field (and nowadays such a construction is fully rigorous)
the construction of interacting theories, the sole relevant for physics, appeared to be a tremendously hard task, this even on a formal level of rigour. 
The various approaches that were developed quickly met serious problems, the most prominent being the divergence of coefficients supposed to describe the formal perturbative expansions  of physical observables
around the free theories.  Eventually, these problems could have been formally circumvented in certain cases by the use of the so-called renormalisation procedure. 
The latter, while being able to produce numbers which were measured with great agreement in collider experiments, eluded for very long any
attempts at making it rigorous. Some progress was eventually achieved for several instances of truly interacting, \textit{viz}. non-free, quantum field theories 
within the so-called constructive quantum field theory approach, see \cite{SummersStateOfTheArtConstructiveQFTAsOf2012} for a review. While successful in rigorously showing 
the existence and certain overall properties of such theories, the approach did not lead yet to rigorous and manageable expressions for the correlation functions, 
which are the quantities measured in experiments and thus of prime interest to the theory. 

Among the various alternatives to renormalisation, one may single out the $\op{S}$-matrix program which aimed at describing a quantum field theory  
directly in terms of the quantities that are measured in experiments. This led to a formulation  of the theory in terms of matrix valued functions in $n$ complex variables, with $n=0,1,2,\dots$, 
that correspond to the entries of the $\op{S}$-matrix between asymptotic states. 
The  $\op{S}$-matrix program was actively investigated in the 60s and 70s and numerous attempts were made to characterise the $\op{S}$-matrix which is the central object in this approach, see \textit{e.g.} \cite{EdenLandshoffOlivePolkinghorneAnalyticPropertiesofSMatrix,IagolnitzerSmatrix}. However, these investigations led to rather 
unsatisfactory results in spacial dimensions higher than one, mainly due to the incapacity of constructing viable, explicit, $\op{S}$-matrices
for non-trivial models. 

The interest in the $\op{S}$-matrix approach was revived by the pioneering work of Gryanik and Vergeles \cite{GryanikVergelesSMatrixAndOtherStuffForSinhGordon}. 
These authors set forth the 
first features of an integrable structure based-method for determining $\op{S}$-matrices for the 1+1 dimensional quantum field theories whose classical analogues 
exhibit an infinite set of independent local integrals of motion.  Indeed, the existence of analogous conservation laws on the quantum level 
heavily constrains the possible form of the scattering
basically by reducing it to a concatenation of two-body processes and hence making the calculations of $\op{S}$-matrices feasible. 
The work \cite{GryanikVergelesSMatrixAndOtherStuffForSinhGordon}
focused on the case of models only exhibiting one type of asymptotic particles, the main example being given by the quantum Sinh-Gordon model. 
This 1+1 dimensional quantum field theory which will be taken as a guiding example from now on. It corresponds to the appropriate quantisation of the  classical evolution equation of a scalar field $\varphi(x,t)$ under the partial differential equation 
\beq
\Big(\Dp{t}^2- \Dp{x}^2\Big)  \vp \, + \, \f{m^2}{g} \sinh(g\varphi) \, = \, 0   \qquad (x,t) \in \R^2\;. 
\enq
For this model, the asymptotic "in" states of the theory are described by vectors $\bs{f}=(f^{(0)},\dots, f^{(n)},\dots)$ which belong to the Fock Hilbert space 
\beq
\mf{h}_{\e{in}} \, = \, \bigoplus\limits_{n=0}^{+\infty} L^2(\R^n_{>}) \qquad \e{with} \qquad \R^n_{>} \; = \; \Big\{ \bs{\be}_{n}=(\be_1,\dots, \be_n) \in \R^n \; : \; \be_1>\dots>\be_n  \Big\} \;. 
\label{definition d eh in}
\enq
This means that $f^{(n)}\in L^2( \R^n_{>} )$ has the physical interpretation of an incoming $n$-particle wave-packet density in rapidity space. More precisely, on physical grounds, one interprets elements of the
Hilbert space $\mf{h}_{\e{in}}$ as parameterised by $n$-particles states, $n\in \mathbb{N}$,  arriving, in the remote past, with well-ordered rapidities $\be_1>\dots>\be_n$
prior to any scattering which would be enforced by the interacting nature of the model.

For the 1+1 dimension quantum Sinh-Gordon model the $\op{S}$-matrix proposed in \cite{GryanikVergelesSMatrixAndOtherStuffForSinhGordon} is purely diagonal and 
thus fully described by one scalar function of the relative "in" rapidities of the two particles:
\beq
\op{S}(\beta)\, = \, \f{ \tanh\big[ \tfrac{1}{2}\beta - \ii \pi  \mf{b}   \big]  }{ \tanh\big[ \tfrac{1}{2}\beta + \ii \pi  \mf{b}   \big]   }  \qquad \e{with} \qquad  \mf{b}\, = \,   \f{1}{2} \f{ g^2  }{ 8\pi + g^{2}  } \;. 
\label{definition matrice S}
\enq
This $\op{S}$-matrix satisfies the unitarity $\op{S}(\be)\op{S}(-\be)=1$, and crossing  $\op{S}(\be)=\op{S}(\ii \pi-\be)$ symmetries. 
These are, in fact,  fundamental symmetry features of an $\op{S}$-matrix and arise in many other integrable quantum field theories. 
 Within the physical picture, throughout the flow of time, the "in" particles approach each other, interact, scatter and finally travel again as asymptotically free outgoing, \textit{viz}. "out", 
particles.
Within such a scheme, an "out" $n$-particle state is then paramaterised by $n$ well-ordered rapidities  $\be_1<\dots < \be_n$ and  can be seen as 
a component of a vector belonging to the Hilbert space 
\beq
 \mf{h}_{\e{out}} \, = \, \bigoplus\limits_{n=0}^{+\infty} L^2(\R^n_{<}) \qquad \e{with} \qquad \R^n_{<} \; = \; \Big\{ \bs{\be}_{n}=(\be_1,\dots, \be_n) \in \R^n \; : \; \be_1<\dots<\be_n  \Big\} \;.  
\enq
The $\op{S}$-matrix will allow one to express the "out" state $\bs{g}=(g^{(0)},\dots, g^{(n)},\dots)$ which results from the scattering of an "in" state 
$\bs{f}=(f^{(0)},\dots, f^{(n)},\dots)$ as 
\beq
g^{(n)}(\be_1,\dots,\be_n) \, = \,   \pl{a<b}{n} S(\be_a-\be_b) \cdot  f^{(n)}(\be_n,\dots,\be_1)  \;. 
\enq
Note that in this integrable setting, there is \textit{no} particle production and that the scattering is a concatenation of two-body processes.

 
Over the years, it turned out to be possible to characterise thoroughly the $\op{S}$-matrices for more involved quantum field theories underlying to other 
integrable classical field theories in 1+1 dimensions. Such models possess several types of asymptotic particles which can also form bound states. 
Then, the "in" Fock Hilbert space is more complicated and takes the form $  \bigoplus\limits_{n=0}^{+\infty} L^2(\R^n_{>},\otimes^n \Cx^{p})$ where 
the $L^2$-space refers to $\otimes^n \Cx^{p}$ valued functions on $\R^n_{>}$, with $p$ corresponding to the number of different asymptotic particles in the given theory. 
The most celebrated example corresponds to the Sine-Gordon quantum field theory. Building on Faddeev-Korepin's \cite{KorepinFaddeevQuantisationOfSolitions} semi-classical quantisation results of the solitons 
in the classical Sine-Gordon model, one concludes that the underlying quantum field theory possesses two distinct types of asymptotic 
particles of equal mass, the soliton and the anti-soliton, as well as a certain number, which depends on the coupling constant, of bound states thereof. 
These all have distinct masses and are called breathers. Zamolodchikov argued the explicit form of the $\op{S}$-matrix governing the soliton-antisoliton scattering \cite{ZalmolodchikovSMatrixSolitonAntiSolitonSineGordon}
upon using  the factorisability of the $n$-particle $\op{S}$-matrix into two-particle processes, 
the independence of the order in which a three particle scattering process arises from a concatenation of two-particle processes as well as
the fact that equal mass particles may \textit{solely} exchange their momenta during scattering, this due to the existence of many conservation laws. 
This enforces that the $\op{S}$ matrix 
satisfies the Yang-Baxter equation, which originally appeared in rather different contexts
\cite{BaxterPartitionfunction8Vertex-FreeEnergy,YangFactorizingDiffusionWithPermutations}, and strongly restricts its form. 
We do stress that the Yang-Baxter equation is the actual cornerstone of quantum integrability, so that it is not astonishing to recover it also in this setting. 
The missing pieces of the Sine-Gordon $\op{S}$-matrix capturing the soliton-breather and breather-breather scarrering were then proposed in \cite{KarowskiThunCompleteSMatrixThirring}. 
Nowadays, $\op{S}$-matrices of many other models have been proposed, see \textit{e.g.} \cite{ArinshteinFateyevZamolodchikovSMatrixTodaChain,ZalZalBrosFactorizedSMatricesIn(1+1)QFT}.


\subsection{The operator content \& the Bootstrap program}

\subsubsection{The basic operators}
\label{SousSoussectionOperateursDeBase}

Having in mind the \textit{per se} full construction of the quantum field theory, identifying the content in asymptotic particles, \textit{viz}. the "in" particles' Hilbert space $\mf{h}_{\e{in}}$, 
and the $\op{S}$-matrix which describes their scattering only arises as the first step. Indeed, one should build, in a way that is compatible with the form of the scattering 
encapsulated in the $\op{S}$-matrix of interest, a family $\op{O}_{\alpha}$ of operator-valued distributions, $\alpha$ running through some set $\mc{S}$. More precisely, 
the $\op{O}_{\alpha}$ should be distributions acting on smooth, compactly supported functions $d(\bs{x})$ of 
the Minkowskian space-time coordinate  
\beq
\bs{x}\, =\, \big( x_0 , x_1\big) \in \R^{1,1} \qquad \e{with} \qquad \bs{x} \cdot \bs{y} = x_0y_0-x_1y_1 \;. 
\enq
Then $\op{O}_{\alpha}[d]$ is some densely defined operator on $\mf{h}_{\e{in}}$ whose domain could, in principle, depend on $d$. It is useful from the point 
of view of connecting this picture to physics to express $\op{O}_{\alpha}$ directly in terms of its generalised operator valued function
\beq
\op{O}_{\alpha}[d] \, = \, \Int{ \R^2 }{} \dd^2  \bs{x} \, d(\bs{x})\op{O}_{\alpha}(\bs{x}) \;. 
\enq
In fact, in physics' terminology, it is the $\op{O}_{\alpha}(\bs{x})$s which correspond to the quantum fields of the theory. 
Moreover, as will be apparent in the following, it turns out that in most handlings $\op{O}_{\alpha}(\bs{x})$
does actually make sense as a \textit{bona fide} operator valued \textit{function} on the Minkowski space having a well-defined dense domain. 
Hence, unless it is mandatory so as to make an appropriate sense out of the formula, we will make use of the generalised function 
notation $\op{O}_{\alpha}(\bs{x})$. 

On top of being compatible with the scattering date, the operators  $\op{O}_{\alpha}(\bs{x})$ should  form an algebra, \textit{viz}. the product $\op{O}_{\alpha}(\bs{x}) \op{O}_{\alpha^{\prime}}(\bs{y})$
should be a well-defined dense operator for almost all $\bs{x}$ and $\bs{y}$, and satisfy causality, \textit{viz}. that for purely Bosonic theories
as the Sinh-Gordon model
\beq
\big[ \op{O}_{\alpha}(\bs{x}) \, , \,  \op{O}_{\alpha^{\prime}}(\bs{y}) \big] \; \equiv  \; 
\op{O}_{\alpha}(\bs{x})  \op{O}_{\alpha^{\prime}}(\bs{y}) \, - \,  \op{O}_{\alpha^{\prime}}(\bs{y}) \op{O}_{\alpha}(\bs{x})  \; = \; 0 \quad \e{if} \quad 
(\bs{x}-\bs{y})^2 \, < \, 0 \;, 
\enq
namely when $\bs{x}-\bs{y}$ is space-like. The family $\op{O}_{\alpha}(\bs{x})$ should in particular contain the \textit{per se} quantised counterparts of the classical fields 
arising in the original evolution equation, for instance $\bs{\Phi}(\bs{x})$ or $\ex{\ga \bs{\Phi} }(\bs{x})$ in the Sinh-Gordon quantum field theory case. 
Moreover, these operators should comply with the various
other symmetries imposed on a quantum field theory, such as invariance under Lorentz boosts of space-time coordinates or translational invariance. 
In the quantum Sinh-Gordon field theory on which we shall focus from now on, the latter means that the model is naturally endowed with a unitary operator $\op{U}_{\op{T}_{\bs{y}}}$ such that 
for any operator $\op{O}(\bs{x})$
\beq
\op{U}_{\op{T}_{\bs{y}}} \cdot \op{O}(\bs{x}) \cdot \op{U}_{\op{T}_{\bs{y}}}^{-1} \, = \, \op{O}(\bs{x}+\bs{y}) \;. 
\label{ecriture action adjointe operateur de translation}
\enq
The operator $\op{U}_{\op{T}_{\bs{y}}}$ acts diagonally on $\mf{h}_{\e{in}}$ given in \eqref{definition d eh in}:
\beq
\op{U}_{\op{T}_{\bs{y}}} \cdot \bs{f} \; = \; \Big( \op{U}_{\op{T}_{\bs{y}}}^{(0)}\cdot f^{(0)},\dots, \op{U}_{\op{T}_{\bs{y}}}^{(n)}\cdot f^{(n)},\dots \Big)
\quad \e{with} \quad \bs{f}=(f^{(0)},\dots, f^{(n)},\dots)
\enq
and where 
\beq
\op{U}_{\op{T}_{\bs{y}}}^{(n)} \cdot f^{(n)}(\bs{\be}_{n}) \; = \; \exp\bigg\{\ii \sul{a=1}{n}\bs{p}(\be_a)\cdot \bs{y} \bigg\}  f^{(n)}(\bs{\be}_{n})
\label{ecriture action operateur de translation}
\enq
with $\bs{p}(\be) = \big( m\cosh(\be), m \sinh(\be) \big)$ and $\bs{\be}_n \, = \, (\be_1,\dots, \be_n)$.

 Should the construction of quantum fields fulfilling to the above be achieved, the ultimate goal would consist in computing in closed and explicit form the model's vacuum-to-vacuum $n$-point correlation functions:
\beq
\Big<  \op{O}_{\alpha_1}(\bs{x}_1) \cdots \op{O}_{\alpha_n}(\bs{x}_n)  \Big> \, = \, \e{Tr}_{ \mf{h}_{\e{in}} } \Big[ \op{P}_0 \op{O}_{\alpha_1}(\bs{x}_1) \cdots \op{O}_{\alpha_n}(\bs{x}_n) \op{P}_0\Big]
\enq
with $\op{P}_0$ being the orthogonal projection on the $0$-particle Fock space. We do stress that the above objects are still generalised functions and, as such, should be 
considered in an appropriate distributional interpretation. That will be made precise below. 
 
 \subsubsection{The bootstrap program for the zero particle sector}

By virtue of the above, in the case of the $\mf{h}_{\e{in}}$ Hilbert space, one may represent an operator $\op{O}(\bs{x})$ as an integral operator 
acting on the $L^2$-based Fock space
\beq
\op{O}(\bs{x}) \cdot \bs{f} \; = \; \Big( \, \op{O}^{(0)}(\bs{x})  \cdot \bs{f} , \cdots ,  \op{O}^{(n)}(\bs{x})\cdot \bs{f} , \cdots \Big) 
\enq
with $ \op{O}^{(n)}(\bs{x}) \, : \,  \mf{h}_{\e{in}} \rightarrow L^2(\R^{n}_{>})$. Later on, we will discuss more precisely the structure 
of the operators $ \op{O}^{(n)}(\bs{x}) $ that one needs to impose so as to end up with a consistent quantum field theory. However, first, we focus our attention on the 
$0^{\e{th}}$ space operators whose action may be represented, whenever it makes sense, as 
\beq
\op{O}^{(0)}(\bs{x})\cdot \bs{f} \; = \; \sul{m \geq 0}{} \; \Int{  \R^m_{>} }{} \hspace{-1mm}  \dd^m \be  \;  
\mc{M}_{0;m}^{(\op{O})}(\bs{\be}_m) \pl{a=1}{m} \Big\{ \ex{- \ii \bs{p}(\be_a)\cdot \bs{x} } \Big\}
f^{(m)}\big( \bs{\be}_m \big)   \;. 
\label{ecriture chp quantique comme op integral secteur 0}
\enq
The oscillatory $\bs{x}$-dependence is a simple consequence of the translation relation \eqref{ecriture action adjointe operateur de translation} along with the explicit form of the action of the translation operator \eqref{ecriture action operateur de translation}.  

In order for $\op{O}^{(0)}(\bs{x})$ to comply with the scattering data encoded by $\op{S}$, one needs to impose a certain amount of constraints on the integral kernels $\mc{M}_{0;m}^{(\op{O})}(\bs{\be}_m)$. 
First of all, general principles of quantum field theory impose that, in order for these to correspond to kernels of quantum fields,the  $\mc{M}_{0;m}^{(\op{O})}(\bs{\be}_m)$
have to correspond to 
a  $+$ boundary value $\mc{F}_{m;+}^{(\op{O})}(\bs{\be}_m)$ on $\R^m_{>}$ of a meromorphic function $\mc{F}_{m}^{(\op{O})}(\bs{\be}_m)$
of the variables $\be_a$ belonging to the strip 
\beq
\msc{S} \, = \,   \big\{ z \in \Cx \; : \;  0 < \Im (z) < 2 \pi \big\} .
\enq
Traditionally, in the physics literature,  the functions $\mc{F}_{m}^{(\op{O})}(\bs{\be}_m)$
are called form factors. 

Further, one imposes a set of equations on the $\mc{F}_{m}^{(\op{O})}$s. These constitute the so-called form factor bootstrap program. On mathematical grounds, one should understand the form factor bootstrap program
as a set of \textit{axioms} that one imposes as a starting point of the theory given the data $\big( \mf{h}_{\e{in}}, \op{S} \big)$. Upon solving them, one has to check 
\textit{a posteriori} that their solutions do provide one, through \eqref{ecriture chp quantique comme op integral secteur 0} and \eqref{ecriture chp quantique comme op integral secteur general}, 
with a collection of operators satisfying all of the requirements of the theory discussed earlier on. 

The bootstrap program axioms take the form of a Riemann-Hilbert problem for a collection of functions in 
many variables.  In the case of the Sinh-Gordon model, since there are no bound states, these take the below form. 

\begin{axiomsFF}
 
Find functions $\mc{F}_{n}^{(\op{O})}$,  $n \in \mathbb{N}$, such that, for each $k \in \intn{1}{n}$ and fixed $\be_a \in \msc{S}$, $a \not=k$, 
the maps $\be_k \mapsto \mc{F}_{n}^{(\op{O})}(\bs{\be}_n)$ are 
\begin{itemize}
 
 \item meromorphic on $\msc{S}$;
 
 \item admit $+$, resp. $-$, boundary values $\mc{F}_{n;+}^{(\op{O})}$ on $\R$, resp. $\mc{F}_{n;-}^{(\op{O})}$ on  $\R+2\ii\pi$;
 
 \item are bounded at infinity by $ C\cdot \cosh\big( \ell \Re(\be_k) \big)$ for some $n$ and $k$ independent  $\ell$.
 
\end{itemize}
The $\mc{F}_{n}^{(\op{O})}$ satisfy the multi-variable system of Riemann-Hilbert problems:
\begin{itemize}
\item[i)]  $\mc{F}_{n}^{(\op{O})}(\be_1,\dots, \be_a, \be_{a+1},\dots,  \be_n) \; = \; \op{S}(\be_{a}-\be_{a+1}) \cdot \mc{F}_{n}^{(\op{O})}(\be_1,\dots, \be_{a+1}, \be_{a},\dots,  \be_n)$;
\item[ii)]  For $\be_1 \in \R$, and given generic $\bs{\be}_n^{\prime}=(\be_2,\dots, \be_n) \in \msc{S}^{n-1}$, \newline  
$ \mc{F}_{n;-}^{(\op{O})}(\be_1+2\ii\pi, \bs{\be}_n^{\prime}) \; = \; \mc{F}_{n;+}^{(\op{O})}(\bs{\be}_n^{\prime},\be_1)
 \, = \, \pl{a=2}{n} \big\{ \op{S}(\be_{a}-\be_{1}) \big\}\cdot \mc{F}_{n;+}^{(\op{O})}( \bs{\be}_n)$; 
\item[iii)] The only poles of  $\mc{F}_{n}^{(\op{O})}$ are simple,  located at 
$\ii\pi$ shifted rapidities and
\beq
-\ii \e{Res}\Big(\mc{F}_{n+2}^{(\op{O})}(\alpha+\ii\pi, \be, \bs{\be}_n) \cdot \dd \alpha \, , \, \alpha=\be  \Big) \; = \; \Big\{ 1\, - \, \pl{a=1}{n}  \op{S}(\be-\be_{a}) \Big\} \cdot 
\mc{F}_{n}^{(\op{O})}( \bs{\be}_n) \;. 
\nonumber
\enq
\item[iv)] $ \mc{F}_{n}^{(\op{O})}( \bs{\be}_n + \theta \ov{\bs{e}}_n )  \; = \; \ex{\theta \op{s}_{\op{O}} } \cdot \mc{F}_{n}^{(\op{O})}(\bs{\be}_n)$ for some number $\op{s}_{\op{O}}$ and with 
 $\ov{\bs{e}}_n=(1,\dots,1)$   
\end{itemize}

\end{axiomsFF}
 
Note that the reduction occurring at the residues of $ \mc{F}_{n}^{(\op{O})}( \bs{\be}_n)$ at the points $\be_{ab}=\ii \pi$ where $\be_{ab}=\be_a-\be_b$ can be readily inferred from $\mathrm{i)}$ and $\mathrm{iii)}$.

One may already comment on the origin of the axioms. The first one illustrates the scattering properties of the model on the level of the operator's kernel. 
The second and third axioms may be interpreted heuristically as a consequence of the LSZ reduction \cite{LehmannSymanzikZimmermanLSZReductionFormulaOriginalPaper}, and locality of the operator,
see \textit{e.g.} \cite{BabujianFringKarowskiZapletalExactFFSineGordonBootsstrapI,SmirnovFormFactors} for heuristics on that matter. Finally, 
the last axiom is a manifestation of the Lorentz invariance of the theory. The number $\op{s}_{\op{O}}$ arising in $\mathrm{iv)}$ is called the spin of the operator.
Moreover, the number $\ell$ depends on the type of operator being considered.  
Finally, for more complex models, one would also need to add an additional axiom which would encapsulate the way how the presence of  bound states
in the model governs certain additional poles in the form factors, \textit{c.f.} \cite{SmirnovFormFactors}.

\subsubsection{The bootstrap program for the multi-particle sector}

It is convenient to represent the action of the operators $\op{O}^{(n)}(\bs{x})$ in the form 
\beq
\Big(\op{O}^{(n)}(\bs{x})\cdot \bs{f} \Big)(\bs{\ga}_n) \; = \; \sul{m \geq 0}{} \; \pl{a=1}{n} \Big\{ \ex{ \ii \bs{p}(\ga_a)\cdot \bs{x} } \Big\}
\cdot   \op{M}_{\op{O}}^{(m)}\big( \bs{x} \, \mid     \bs{\ga}_n \big) \big[f^{(m)}\big] \;. 
\label{ecriture chp quantique comme op integral secteur general}
\enq
There $\op{M}_{\op{O}}^{(m)}\big( \bs{x} \, \mid     \bs{\ga}_n \big)$ are distribution valued functions which act on appropriate spaces of sufficiently regular functions in $m$ variables. 
The regularity assumptions will clear out later on, once that we provide the explicit expressions \eqref{ecriture forme explicite de la distribution} for these distributions. In fact, it is convenient, in order to avoid heavy notations, to represent their action as 
generalised integral operators
\beq
  \op{M}_{\op{O}}^{(m)}\big( \bs{x} \, \mid     \bs{\ga}_n \big) \big[f^{(m)}\big]  \; =   \Int{ \R^m_{>} }{} \hspace{-1mm}  \dd^m \be  \;   \mc{M}^{(\op{O})}_{n;m}\big(     \bs{\ga}_{n}  ; \bs{\be}_{m} \big) 
\pl{a=1}{m} \Big\{ \ex{- \ii \bs{p}(\be_a)\cdot \bs{x} } \Big\} 
f^{(m)}\big( \bs{\be}_m \big) \;, 
\enq
in which one understands of the kernels $\mc{M}^{(\op{O})}_{n;m}\big( \bs{\ga}_{n}  ; \bs{\be}_{m} \big) $ as generalised functions.  
 
  The last axiom of the bootstrap program  provides one  with a way to compute these kernels. Heuristically, it can be seen as a consequence of the LSZ reduction \cite{LehmannSymanzikZimmermanLSZReductionFormulaOriginalPaper}. 
\begin{itemize} 
 \item[v)] $\mc{M}^{(\op{O})}_{n;m}\big(  \bs{\alpha}_n ; \bs{\be}_{m} \big) \; = \;  \mc{M}^{(\op{O})}_{n-1;m+1}\big(  \bs{\alpha}_n^{\prime} ; (\alpha_1+\ii \pi, \bs{\be}_m) \big) \\
\; + \; 2\pi \sul{a=1}{m}  \de_{\alpha_1;\be_a} \pl{k=1}{a-1} \op{S}(\be_k-\alpha_1) \cdot 
\mc{M}^{(\op{O})}_{n-1;m-1}\big(  \bs{\alpha}_n^{\prime}; (\be_1, \dots, \wh{\be}_a , \dots,  \be_m) \big)$. 
\end{itemize}

In the above expression, $ \wh{\be}_a$ means that the variable $\be_a$ should be omitted and $\de_{x;y}$ refers to the Dirac mass distribution centred at $x$ and acting on functions of $y$. 
Finally, the evaluation at $\alpha_1+\ii\pi$ is understood in the sense of a boundary value of the meromorphic continuation in the strip $0\leq \Im(z) \leq \pi$ from $\R$ up to $\R+\ii\pi$. 
This axiom is to be complemented with the initialisation
condition $\mc{M}^{(\op{O})}_{0;n}\big( \emptyset ; \bs{\be}_{n} \big)\;= \;   \mc{F}_{n;+}^{(\op{O})}(\bs{\be}_n)$ when $\bs{\be}_n \in \R^{n}_{>}$. 
It is direct to establish that the recursion may be solved in closed form allowing one to determine the distributional kernel $\mc{M}^{(\op{O})}_{n;m}\big(  \bs{\alpha}_n ; \bs{\be}_{m} \big)$
in terms of $\mc{F}_{n}^{(\op{O})}(\bs{\be}_n)$.
\begin{multline}
\mc{M}^{(\op{O})}_{n;m}\big(  \bs{\alpha}_n ; \bs{\be}_{m} \big) \; = \; \sul{p=0}{ \e{min}(n,m) } \sul{  \substack{ k_1<\dots < k_p \\ 1 \leq k_a \leq n}  }{} 
\sul{  \substack{ i_1\not=\dots\not= i_p \\ 1 \leq i_a \leq m}  }{}  
\pl{a=1}{p}\Big\{2\pi \de_{\alpha_{k_a} ; \be_{i_a}}  \Big\}  \op{S}\big(\overleftarrow{\bs{\alpha}}_{n}  \mid \overleftarrow{\bs{\alpha}}_{n}^{(1)} \big)  \\
\times \op{S}\big(\bs{\be}_{n}^{(1)}  \mid \bs{\be}_{n} \big)    \cdot \mc{F}_{n+m-2p;- }\big(  \overleftarrow{\bs{\alpha}}_{n}^{(2)} +\ii\pi \ov{\bs{e}}_{n-p}, \bs{\be}_m^{(2)} \big) 
\label{ecriture forme explicite de la distribution}
\end{multline}
There, we have used the shorthand notations $\bs{\alpha}_{n}^{(1)}=(\alpha_{k_1},\dots, \alpha_{k_p})$ and $\bs{\alpha}_{n}^{(2)}=(\alpha_{\ell_1},\dots, \alpha_{\ell_{n-p}})$
where $\{\ell_1,\dots, \ell_{n-p} \} \, = \, \intn{1}{n} \setminus \{ k_a\}_1^p$, $\ell_1<\dots < \ell_{n-p}$, and analogously
$\bs{\be}_{m}^{(1)}=(\be_{i_1},\dots, \be_{i_p})$ and $\bs{\be}_{m}^{(2)}=(\be_{j_1},\dots, \be_{j_{m-p}})$
where $\{j_1,\dots, j_{m-p} \} \, = \, \intn{1}{m} \setminus \{ i_a\}_1^p$, $j_1<\dots < j_{m-p}$. Moreover, we have introduced 
\beq
 \op{S}\big(\overleftarrow{\bs{\alpha}}_{n}  \mid \overleftarrow{\bs{\alpha}}_{n}^{(1)} \big)  =   \pl{a=1}{p} \pl{ \substack{  b =1 \\  k_a> \ell_b}  }{ n-p } \op{S}\big( \alpha_{k_a} - \alpha_{\ell_b} \big)
\; , \quad 
\op{S}\big(\bs{\be}_{n}^{(1)}  \mid \bs{\be}_{n} \big)    =  \pl{a=1}{p}  \pl{ \substack{  b =1 \\ b<i_a}  }{ m }  S(\be_a-\be_{i_a}) \cdot \pl{ \substack{ a>b \\  i_a>i_b} }{} S\big( \be_{i_a} - \be_{i_b} \big)
\nonumber
\enq
Finally, we agree upon  $\overleftarrow{\bs{\ga}}_N = (\ga_N,\dots, \ga_1)$ for any $\bs{\ga}_N=(\ga_1,\dots, \ga_{N})$.  

It is clear on the level of the explicit expression \eqref{ecriture forme explicite de la distribution} that this generalised function is well defined, 
even though it involves a multiplication of distributions.

 \subsubsection{The road towards the bootstrap program}
 
 The first calculation of certain of the operators' kernels was initiated by  Weisz \cite{WeisztwoparticleFFForSolitionAntiSolitionInSineGordon}
 who built on the full characterisation of the $\op{S}$ matrix of the Sine-Gordon model to argue with the help of general principles of quantum field theory an expression for
 the kernel $\mc{M}^{(\op{O})}_{1;1}(\alpha;\be)$ of the electromagnetic current operator only involving one dimensional variables $\alpha, \be$. The setting up of a systematic approach allowing one to calculate
all the collection of kernels characterising an operator starting from a given model's $\op{S}$-matrix has been initiated by Karowski and Weisz \cite{KarowskiWeiszFormFactorsFromSymetryAndSMatrices}
who proposed  a set of equation satisfied by that model's equivalent of $\mc{F}_{n}^{(O)}( \bs{\be}_n)$. These allowed them to provide closed expressions for two particle form factors in several models. 
However these equations were still far from forming the full bootstrap program as described above.

After long investigations \cite{SmirnovUseGLMEqnsTocomputeSineGordonFF,SmirnovGLMEqnsDerivationForSineGordon,SmirnovUseGLMEqnsForSineGordonFF}  
which revealed a deeper structure of the form factors of the Sine-Gordon model, Smirnov \cite{SmirnovIntegralRepSolitonFFSineGordonBootstrap} formulated the equivalent of axioms $\mathrm{i)-ii)}$ in that model. 
Subsequently, Kirillov and Smirnov \cite{KirillovSmirnovFirstCompleteSetBootstrapAxiomsForQIFT} proposed the full set of the bootstrap program axioms, exemplified
in the case of the Massive Thirring model. See also \cite{SmirnovFormFactors}

\section{Solving the bootstrap program}



The resolution of the bootstrap program was systematised over the years and these efforts led to explicit expressions for the form factors of local operators in
numerous 1+1 dimensional massive quantum field theories, see \textit{e.g.} 
\cite{SmirnovFormFactors}. The first expressions for the form factors were rather combinatorial in nature. Later, a substantial progress was achieved in simplifying the latter, 
in particular by exhibiting a deeper structure at their root. Notably, one can mention the free field based approach,  also called angular quantisation,
to the calculation of form factors. It was introduced by Lukyanov \cite{LukyanovFirstIntroFreeField} and allowed to obtain convenient representations for certain form factors solving the bootstrap program. 
In particular, the construction lead to closed and manageable expressions \cite{LukyanovConjectureFFExponentialFieldSineGSolASolAndBReather,BrazhnikovLukyanovFreeFieldRepMassiveFFIntegrable} for the form factors of the exponential of the field operators in the Sinh-Gordon and the 
Bullough-Dodd models. Later, Babujian, Fring, Karowski, Zapetal \cite{BabujianFringKarowskiZapletalExactFFSineGordonBootsstrapI} and Babujian, Karowski \cite{BabujianKarowskiExactFFSineGordonBootsstrapII,BabujianKarowskiBreatherFFSineGordon}
developed the more powerful $\mc{K}$-transform approach which will be described below  on the example of the Sinh-Gordon model.
The construction of \cite{BabujianKarowskiExactFFSineGordonBootsstrapII,BabujianKarowskiBreatherFFSineGordon} was improved in   \cite{FeiginLashkevichFreeFieldApproachToDescendents,LashkevichPugaiDescendantFFResonanceIdentites}
so as to encompass  more complicated operators, the so-called descendants of the Sinh-Gordon exponential of the field operator.

\subsection{The 2-particle sector solution}

The constructions of solutions to the Bootstrap program starts from obtaining a specific solution to the equations $\mathrm{i)-iv)}$ when $n=2$, \textit{i.e.} for two variables. This was first achieved in 
\cite{KarowskiWeiszFormFactorsFromSymetryAndSMatrices}. 
\begin{lemma}\cite{KarowskiWeiszFormFactorsFromSymetryAndSMatrices}

Let $\mc{F}^{(O)}_{2}(\be_1,\be_2)$ solve $\mathrm{i)-iv)}$ at $n=2$. Then, there exists $k\in \intn{ 0 }{ \tf{\ell}{2} }$, $\varkappa_a\in \Cx$, $a=1,\dots, k$,  such that 
\beq
\mc{F}^{(O)}_{2}(\be_1,\be_2) \, = \, \mc{N}_{\op{O}} \, \pl{a=1}{k} \Big\{ \sinh\Big[ \f{ \be_{12} - \varkappa_a }{2 }\Big] \cdot \sinh\Big[ \f{ \be_{12} + \varkappa_a }{2 }\Big]  \Big\} \, 
\ex{ \f{ \op{s}_{\op{O}} }{ 2 } (\be_1+\be_2) } \,  \op{F}(\be_{12})
\enq
for some $\mc{N}_{\op{O}} \in \Cx$ and where $\op{F}$ is given by the integral representation valid for $0< \Im(\be) < 2\pi$:
\beq
\op{F}(\be) \; = \; \exp\Bigg\{ - 4  \Int{0}{+\infty} \dd x \f{ \sinh(x \mf{b}) \cdot \sinh(x \hat{\mf{b}} ) \cdot \sinh(\tfrac{1}{2}x)   }{ x \sinh^2(x)  }   \cos\Big(  \tfrac{ x  }{ \pi }(\ii\pi - \be)  \Big) \Bigg\} 
\quad \e{with} \quad \hat{\mf{b}} \, = \, \f{1}{2}\, - \, \mf{b} \;. 
\label{ecriture rep int pour F}
\enq

\end{lemma}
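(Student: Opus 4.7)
My plan is to exploit axiom iv) to reduce to a one-variable problem, construct a canonical holomorphic, nowhere-vanishing ``minimal'' solution $\op{F}$ satisfying the resulting Watson equations, and finally classify all solutions by a Liouville-type argument applied to the quotient $h/\op{F}$.

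First, axiom iv) at $n = 2$ with $\ov{\bs{e}}_{2} = (1,1)$ forces the decomposition $\mc{F}_{2}^{(\op{O})}(\be_1, \be_2) = \ex{ \f{ \op{s}_{\op{O}} }{ 2 }(\be_1 + \be_2) } h(\be_{12})$ for some $h$ meromorphic on $\msc{S}$ and bounded by $C \cosh(\ell \Re \be)$ on $\R$. Substituting into axiom i) gives Watson's equation $h(\be) = \op{S}(\be) h(-\be)$, whereas axiom ii) combined with i) reduces to $h(\be + 2\ii\pi) = \ex{-\ii \pi \op{s}_{\op{O}}} h(-\be)$. Axiom iii) read at $n = 0$, whose right-hand side features an empty product and hence vanishes, rules out the only potential pole of $h$ in the strip (located at $\be_{12} = \ii\pi$), so $h$ is in fact holomorphic throughout $\msc{S}$.

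The next step is to construct the minimal $\op{F}$. I would seek $\log \op{F}(\be)$ in the form $-4\Int{0}{+\infty} f(x) \cos\pa{\f{x}{\pi}(\ii\pi-\be)} \dd x$ and impose the Watson equations at $\op{s}_{\op{O}} = 0$; the product-to-sum identity $\cos A - \cos B = -2\sin\tfrac{A+B}{2}\sin\tfrac{A-B}{2}$ turns the first Watson equation into a Fourier sine representation of $\log \op{S}$. Matching with the classical integral representation
\begin{equation*}
\log \op{S}(\be) \, = \, -8\ii \Int{0}{+\infty} \f{ \sinh(x\mf{b}) \sinh(x \hat{\mf{b}}) \sinh(\tfrac{1}{2}x) }{ x \sinh(x) } \sin\pa{ \f{ x \be }{ \pi } } \, \dd x,
\end{equation*}
deducible from \eqref{definition matrice S} by standard contour manipulations, pins down $f(x) = \sinh(x\mf{b})\sinh(x\hat{\mf{b}})\sinh(\tfrac{1}{2}x) / (x \sinh^{2}(x))$ and yields \eqref{ecriture rep int pour F}. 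The second Watson equation is then an immediate consequence of the evenness of cosine. Convergence at $x = 0$ follows from $f(x) = O(x^{2})$ and at infinity from the $\sinh^{-2}(x)$ damping, while holomorphy on $\msc{S}$ and the absence of zeros are manifest from the exponential form.

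Finally, set $R(\be) = h(\be)/\op{F}(\be)$. The Watson equations for $h$ and $\op{F}$ yield $R(-\be) = R(\be)$ and $R(\be + 2\ii\pi) = \ex{-\ii\pi \op{s}_{\op{O}}} R(\be)$, and, for those spin values for which the phase $\ex{-\ii\pi\op{s}_{\op{O}}}$ is trivial (the case covered by the stated formula), $R$ extends to an entire, even, $2\ii\pi$-periodic function. The bound $|h(\be)| \leq C\cosh(\ell \Re\be)$ combined with the polynomial large-$|\be|$ asymptotics of $\op{F}$ on $\R$ (extracted by integration by parts in \eqref{ecriture rep int pour F}) then forces $R$ to be a polynomial in $\cosh(\be)$ of degree $k \leq \tf{\ell}{2}$, which the factorisation $\cosh(\be) - \cosh(\varkappa) = 2\sinh[(\be-\varkappa)/2]\sinh[(\be+\varkappa)/2]$ rewrites in the claimed product form. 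The main obstacle is the analytic control of $\op{F}$: rigorously verifying the Watson equations, ruling out zeros in $\msc{S}$, and extracting the precise real-line asymptotics --- all prerequisites for closing the Liouville-type classification of $R$.
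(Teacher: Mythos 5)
Your proof is correct and follows essentially the same route as the paper's: reduce to one variable via axiom $\mathrm{iv)}$, obtain the Watson-type scalar Riemann--Hilbert problem from $\mathrm{i)}$--$\mathrm{ii)}$, construct the minimal solution $\op{F}$ from the integral representation of $\ln\op{S}$ (the paper does this via a $2\ii\pi$-periodic Cauchy transform rather than your ansatz-and-match, but the result is the same formula), and classify the quotient $h/\op{F}$ as an even, $2\ii\pi$-periodic Laurent polynomial of degree controlled by $\ell$. Your explicit appeal to axiom $\mathrm{iii)}$ at $n=0$ to rule out the pole at $\be_{12}=\ii\pi$, and your remark about the residual spin phase $\ex{-\ii\pi\op{s}_{\op{O}}}$ in the periodicity relation, are details the paper passes over silently.
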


\begin{proof}

Axiom $\mathrm{iv)}$ ensures that 
$\mc{F}^{(O)}_{2}(\be_1,\be_2) \, = \,  \ex{ \f{ \op{s}_{\op{O}} }{ 2 } (\be_1+\be_2) } \,  \wt{\op{F}}(\be_1-\be_2)$
for some function $\wt{\op{F}}(\be)$ that is holomorphic on the strip $0<\Im(\be)<2\pi$, bounded at infinity by $C \cosh(\ell \be)$, 
and such that $\wt{\op{F}}_-(\be+2\ii\pi)=\op{S}(\be) \wt{\op{F}}_{+}(-\be)=\wt{\op{F}}_{+}(\be)$, $\be \in \R$.  
One first looks for a particular solution to this scalar Riemann-Hilbert problem, namely a holomorphic function $\op{F}$ in the strip $0 < \Im(\be) < 2\pi $ 
which behaves as $\op{F}(\be) = 1 \, + \, \e{O}\big( \be^{-2} \big)$ as 
$\Re(\be) \rightarrow \pm \infty$ uniformly in $0\leq \Im(\be) \leq 2\pi $ and satisfies $\op{F}_-(\be+2\ii\pi)=\op{S}(-\be) \op{F}_{+}(\be) = \op{F}_{+}(-\be)$, $\be \in \R$.

Starting from the below integral representation  
\beq
 \op{S}(\be) \; = \; \exp\Bigg\{ 8 \Int{0}{+\infty} \dd x \f{ \sinh(x \mf{b}) \cdot \sinh(x \hat{\mf{b}} ) \cdot \sinh(\tfrac{1}{2}x)   }{ x \sinh(x)  }   \sinh\Big(  \tfrac{ x \be }{ \ii \pi } \Big) \Bigg\}
 \;,
\label{ecriture rep int fct S}
\enq
one readily checks that the solution is provided by the below $2\ii\pi$-periodic Cauchy transform
\beq
 \op{F}(\be)  \; = \; \exp\Bigg\{  \Int{\R}{} \f{ \dd s }{ 4\ii\pi }  \coth\Big[ \tfrac{1}{2}(s-\be) \Big] \ln \op{S}(s)   \Bigg\} \;. 
\enq
The $s$ integral can then be taken by means of the integral representation \eqref{ecriture rep int fct S} for $\ln \op{S}(s)$ and leads to \eqref{ecriture rep int pour F}. 
Now it is easy to check that the holomorphic function $\op{G}(\be) = \tf{ \wt{\op{F}}(\be) }{ \op{F}(\be) }$ on the strip $0< \Im(\be) < 2\pi$ admits $\pm$ boundary values and 
satisfies $\op{G}_-(\be+2\ii\pi)=\op{G}_{+}(\be)$ for $\be \in \R$ and is bounded by $C \cosh\big(\ell \Re(\be) \big)$ as $\Re(\be) \rightarrow \infty$ in this strip. As a consequence, it admits a unique
extension into a $2\ii\pi$ periodic entire function bounded by $C \cosh(\ell \be)$ and hence is of the form $P_{\ell}(\ex{\be})$, where $P_{\ell}$ is a Laurent polynomial of maximal positive and negative degree $\ell$. 
Since it is $2\ii\pi$ periodic and even, $P_{\ell}(\ex{\be})$ necessarily takes the form 
\beq
P_{\ell}(\ex{\be})\; = \; \pl{a=1}{k} \Big\{ \sinh\Big[ \f{ \be - \varkappa_a }{2 }\Big] \cdot \sinh\Big[ \f{ \be + \varkappa_a }{2 }\Big]  \Big\} \quad \e{for} \; \e{some} \quad  2k \leq \ell \;. 
\enq

\end{proof}

\subsection{The n-particle sector solution}

 \begin{prop}
 Consider the change of unknown functions
 \beq
 \mc{F}_{n}^{(\op{O})}(\bs{\be}_n) \, = \, \pl{a<b}{n} \op{F}\big(\be_{ab}\big) \cdot \mc{K}_{n}^{(\op{O})}\big( \bs{\be}_{n} \big)              \qquad  with \qquad \be_{ab} \, = \,  \be_{a}-\be_{b} \;, 
 \enq
 with $\op{F}$ as defined through \eqref{ecriture rep int pour F}. Then $ \mc{F}_{n}^{(\op{O})}$ solves the bootstrap axioms $\mathrm{i)-iv)}$ if any only if
\begin{itemize}
 
 \item[I)] $\mc{K}_{n}^{(\op{O})}$  is a symmetric function of $\bs{\be}_n$;

 \item[II)] $\mc{K}_{n}^{(\op{O})}$  is a $2\ii\pi $ periodic and meromorphic function of each variable taken singly;

 \item[III)] the only poles of $\mc{K}_{n}^{(\op{O})}$ are simple and located at $\be_a-\be_b\in \ii\pi (1+2\mathbb{Z})$. The associated residues are given by 
\beq
 \e{Res}\bigg(\mc{K}_{n}^{(\op{O})}(\bs{\be}_n) \cdot \dd \be_1 \, , \, \be_{12} =  \ii \pi  \bigg)  
 \; = \;  \f{\ii }{ \op{F}(\ii\pi) }  \cdot 
   \f{  1\, - \, \pl{a=3}{n}  \op{S}(\be_{2a})  }{  \pl{a=3}{n} \big\{ \op{F}(\be_{2a}+\ii\pi) \op{F}( \be_{2a}) \big\} } 
\cdot 
\mc{K}_{n-2}^{(\op{O})}( \bs{\be}_n^{\prime\prime} )
\enq
where  $\bs{\be}_n^{\prime\prime} \, = \, \big( \be_{3},\dots, \be_n\big)$;
 
 \item[IV)] $ \mc{K}_{n}^{(\op{O})}(\bs{\be}_n + \theta \ov{\bs{e}}_n)  \; = \; \ex{\theta \op{s}_{\op{O}} } \cdot \mc{K}_{n}^{(\op{O})}(\bs{\be}_n  )$.
\end{itemize}

\end{prop}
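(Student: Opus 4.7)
The strategy is to translate each of the four form-factor axioms into its counterpart on $\mc{K}_n^{(\op{O})}$ by substituting $\mc{F}_n^{(\op{O})} = \prod_{a<b}\op{F}(\be_{ab})\cdot \mc{K}_n^{(\op{O})}$ and invoking the two functional identities satisfied by $\op{F}$ that were established in the proof of the previous lemma, namely
\begin{equation*}
\op{F}(-\be)\, = \, \op{S}(\be)\, \op{F}(\be) \qquad \text{and} \qquad \op{F}(\be + 2\ii\pi) \, = \, \op{F}(-\be) \, = \, \op{S}(-\be)\, \op{F}(\be),
\end{equation*}
together with the fact that $\op{F}$ is holomorphic and zero-free on the strip $0<\Im(\be)<2\pi$ (the latter being immediate from the exponential representation \eqref{ecriture rep int pour F}). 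The zero-freeness is crucial: it guarantees that $\mc{K}_n^{(\op{O})}=\mc{F}_n^{(\op{O})}/\prod\op{F}(\be_{ab})$ has the same pole set as $\mc{F}_n^{(\op{O})}$, with no spurious poles introduced.

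For axiom $\mathrm{i)}$, the swap $\be_a\leftrightarrow\be_{a+1}$ preserves the pair product $\prod_{c<d}\op{F}(\be_{cd})$ up to the single factor $\op{F}(\be_{a,a+1})\mapsto\op{F}(-\be_{a,a+1})$, which by the first identity equals $\op{S}(\be_{a,a+1})\op{F}(\be_{a,a+1})$; this $\op{S}$-factor is exactly the one prescribed by $\mathrm{i)}$, so the braiding becomes trivial on $\mc{K}_n^{(\op{O})}$, yielding symmetry (condition I). Axiom $\mathrm{iv)}$ is immediate since the prefactor depends only on the differences $\be_{ab}$, which are invariant under the global shift $\bs{\be}_n\mapsto\bs{\be}_n+\theta\ov{\bs{e}}_n$; hence the spin covariance transfers verbatim to condition IV.

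For axiom $\mathrm{ii)}$, the shift $\be_1\mapsto\be_1+2\ii\pi$ affects only the pair factors $\op{F}(\be_{1a})$ with $a\geq 2$; by the second identity each such factor picks up a multiplier $\op{S}(\be_a-\be_1)$, and the accumulated product is exactly cancelled by the prefactor $\prod_{a=2}^n\op{S}(\be_a-\be_1)$ on the right hand side of $\mathrm{ii)}$. What remains is the plain $2\ii\pi$-periodicity of $\mc{K}_n^{(\op{O})}$ in $\be_1$, and the symmetry established above then extends this to all variables, giving condition II. The meromorphy on $\msc{S}$ combined with $2\ii\pi$-periodicity promotes $\mc{K}_n^{(\op{O})}$ to a meromorphic function on $\Cx$ in each variable.

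The slightly more intricate step is axiom $\mathrm{iii)}$: since $\op{F}$ is holomorphic at $\ii\pi$, the residue of $\mc{F}_n^{(\op{O})}$ at $\be_{12}=\ii\pi$ is just the value of the prefactor at $\be_1=\be_2+\ii\pi$ multiplied by the residue of $\mc{K}_n^{(\op{O})}$. A direct enumeration of pairs separates this prefactor value into $\op{F}(\ii\pi)\cdot\prod_{a=3}^n\{\op{F}(\be_{2a}+\ii\pi)\op{F}(\be_{2a})\}$ times the reduced product $\prod_{2<a<b}\op{F}(\be_{ab})$, the latter being precisely the prefactor of $\mc{F}_{n-2}^{(\op{O})}(\bs{\be}_n^{\prime\prime})$. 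Dividing through and using $-1/\ii=\ii$ yields the residue formula stated in III. The only care required here is keeping track of the sign from axiom $\mathrm{iii)}$ and confirming that the pole is indeed simple (which follows from the simplicity of the pole of $\mc{F}_n^{(\op{O})}$ and the nonvanishing of $\op{F}(\ii\pi)$). All four equivalences being established, reversing the substitution shows the implication in both directions, proving the \emph{if and only if} claim.
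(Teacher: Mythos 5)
The paper states this proposition without any accompanying proof, so there is nothing to compare against directly; your substitution-and-cancellation strategy is the natural and surely the intended one, and your treatment of axioms $\mathrm{ii)}$ and $\mathrm{iv)}$ is correct. However, your first displayed identity, $\op{F}(-\be)=\op{S}(\be)\op{F}(\be)$, is wrong and is in fact inconsistent with the second identity on the same line: the relation established in the proof of the preceding lemma is $\op{F}_-(\be+2\ii\pi)=\op{S}(-\be)\op{F}_+(\be)=\op{F}_+(-\be)$, equivalently $\op{F}(\be)=\op{S}(\be)\op{F}(-\be)$. Used literally, your version turns the axiom-$\mathrm{i)}$ computation into $\mc{K}_{n}^{(\op{O})}(\dots,\be_a,\be_{a+1},\dots)=\op{S}^2(\be_{a,a+1})\cdot\mc{K}_{n}^{(\op{O})}(\dots,\be_{a+1},\be_{a},\dots)$ rather than the claimed symmetry. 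With the correct identity the swap multiplies the prefactor by $\op{S}(\be_{a,a+1})^{-1}$, which cancels precisely the $\op{S}(\be_{a,a+1})$ prescribed by $\mathrm{i)}$; the conclusion is right, but as written the step fails.

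The second point concerns your assertion that zero-freeness of $\op{F}$ on the open strip guarantees that $\mc{K}_{n}^{(\op{O})}$ has the same pole set as $\mc{F}_{n}^{(\op{O})}$. This is not quite enough: the difference $\be_{ab}$ of two points of $\msc{S}$ ranges over $-2\pi<\Im(\be_{ab})<2\pi$ and in particular attains the value $0$, where $\op{F}$ vanishes --- this vanishing is forced by $\op{F}(\be)=\op{S}(\be)\op{F}(-\be)$ together with $\op{S}(0)=-1$. Hence the prefactor $\pl{a<b}{n}\op{F}(\be_{ab})$ has zeros on the diagonals $\be_a=\be_b$, and to conclude that the only poles of $\mc{K}_{n}^{(\op{O})}$ are at $\be_{ab}\in\ii\pi(1+2\mathbb{Z})$ you must also note that axiom $\mathrm{i)}$ evaluated at coinciding rapidities forces $\mc{F}_{n}^{(\op{O})}$ to vanish there as well (again because $\op{S}(0)=-1$), and that the orders of vanishing match. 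This is a repairable omission rather than a wrong approach, but the ``no spurious poles'' claim does not follow from what you have actually proved. The residue computation for $\mathrm{III)}$ is otherwise correctly organised.
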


%
%
%
%
%
%
%
%
This first transformation simplifies the symmetry properties of the problem. However, the inductive reductions provided by the computation of the residues 
are still quite intricate. The idea is then to proceed to yet another change of unknown function, this time by means of a more involved transform. The latter will then lead to structurally much 
simpler, and thus easier to solve, equations satisfied by the new unknown function. As already mentioned, the dawn of this approach goes back to 
\cite{LukyanovConjectureFFExponentialFieldSineGSolASolAndBReather,BrazhnikovLukyanovFreeFieldRepMassiveFFIntegrable} and it was put in the present form in 
\cite{BabujianFringKarowskiZapletalExactFFSineGordonBootsstrapI,BabujianKarowskiExactFFSineGordonBootsstrapII,BabujianKarowskiBreatherFFSineGordon}. 
In particular, we refer to \cite{BabujianKarowskiExactFFSineGordonBootsstrapII}
for the proof.

\begin{prop} \cite{BabujianKarowskiExactFFSineGordonBootsstrapII}

 Let $\bs{\ell}_n\in \{0,1\}^n$ and $p_{n}^{(\op{O})}\big(\bs{\be}_n\mid \bs{\ell}_n\big)$ be a solution to the below constraints 
\begin{itemize}
 
 \item[a)]  $\bs{\be}_n\mapsto p_{n}^{(\op{O})}\big(\bs{\be}_n\mid \bs{\ell}_n\big)$ is a collection of $2\ii\pi$ periodic holomorphic functions on $\Cx$ that are symmetric in the two sets of variables jointly,  \textit{viz}. 
 for any $\sg \in \mf{S}_n$ it holds $ p_{n}^{(\op{O})}\big(\bs{\be}_n^{\sg}\mid \bs{\ell}_n^{\sg} \big)=  p_{n}^{(\op{O})}\big(\bs{\be}_n\mid \bs{\ell}_n\big)$ 
with $\bs{\be}_n^{\sg}=\big( \be_{\sg(1)}, \dots, \be_{\sg(n)} \big)$;
 
  \item[b)] $p_{n}^{(\op{O})}\big(\be_2+\ii\pi, \bs{\be}_n^{\prime}\mid \bs{\ell}_n\big)\, = \,g(\ell_1,\ell_2) p_{n-2}^{(\op{O})}\big(\bs{\be}_n^{\prime\prime}\mid \bs{\ell}_n^{\prime\prime}\big)
  \, + \, h(\ell_1,\ell_2\mid \bs{\be}_n^{\prime})$
where $h$ does not depend on the remaining set of variables $\bs{\ell}_n^{\prime\prime}$ and 
\beq
g(0,1)\, = \, g(1,0) \, = \, \f{ -1 }{ \sin (2\pi \mf{b} ) \,  \op{F}(\ii\pi)  } \, ; 
\enq

 \item[c)]  $ p_{n}^{(\op{O})}\big( \bs{\be}_n + \theta \ov{\bs{e}}_n \mid \bs{\ell}_n \big)  \; = \; \ex{\theta \op{s}_{\op{O}} } \cdot p_{n}^{(\op{O})}\big(\bs{\be}_n\mid \bs{\ell}_n\big)$.

\end{itemize}

 Then, its $\mc{K}$-transform 
\beq
  \mc{K}_{n}\big[ p_n^{(\op{O})} \big]\big( \bs{\be}_{n} \big) \, =  \hspace{-2mm} \sul{ \bs{\ell}_n \in \{0,1\}^n }{} (-1)^{\ov{\bs{\ell}}_n}
\pl{a<b}{n} \bigg\{ 1 \, - \, \ii \f{ \ell_{ab} \cdot \sin[2\pi \mf{b} ] }{ \sinh(\be_{ab})  }  \bigg\} \cdot p_n^{(\op{O})}\big(\bs{\be}_n\mid \bs{\ell}_n\big)  \;, 
\enq
in which $\ov{\bs{\ell}}_n \, = \, \sul{a=1}{n} \ell_k$, solves $\mathrm{ I)-IV) }$.

\end{prop}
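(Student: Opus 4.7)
The plan is to verify properties $\mathrm{I)-IV)}$ in turn, tracking how the hypotheses $\mathrm{a)-c)}$ on $p_n^{(\op{O})}$ propagate through the $\mc{K}$-transform.

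Property $\mathrm{I)}$ follows from the joint invariance of the summand under permutations of $(\bs{\be}_n,\bs{\ell}_n)$: the weight $\prod_{a<b}\{1-\ii\ell_{ab}\sin[2\pi\mf{b}]/\sinh(\be_{ab})\}$ is stable because $\ell_{ab}$ and $\sinh(\be_{ab})$ change sign together under $a\leftrightarrow b$, while $p_n^{(\op{O})}$ is jointly symmetric by $\mathrm{a)}$; relabelling the summation index $\bs{\ell}_n\mapsto\bs{\ell}_n^{\sg}$ then transfers this to the desired symmetry in $\bs{\be}_n$ alone. Property $\mathrm{II)}$ is immediate, since each of the $2^n$ summands is a rational function of $\sinh(\be_{ab})$ multiplied by the $2\ii\pi$-periodic entire function $p_n^{(\op{O})}$, whose only possible singularities come from $\sinh(\be_{ab})=0$. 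Property $\mathrm{IV)}$ is equally direct: the prefactor depends only on the differences $\be_{ab}$ and is therefore shift-invariant, so that $\mathrm{c)}$ transfers $\ex{\theta\op{s}_{\op{O}}}$ to $\mc{K}_n[p_n^{(\op{O})}]$.

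The substantial content lies in $\mathrm{III)}$. I would fix $\be_{12}\to\ii\pi$ and observe that only the summands with $\ell_{12}\neq 0$, \textit{viz.} $(\ell_1,\ell_2)\in\{(0,1),(1,0)\}$, develop a pole. Using $\sinh(\be_{1a})=-\sinh(\be_{2a})$ at $\be_1=\be_2+\ii\pi$ together with $\ell_a(1-\ell_a)=0$ on $\{0,1\}$, the pair of crossed factors $\{1-\ii\ell_{1a}\sin[2\pi\mf{b}]/\sinh(\be_{1a})\}\{1-\ii\ell_{2a}\sin[2\pi\mf{b}]/\sinh(\be_{2a})\}$ collapses, for each $a\geq 3$, to the $\ell_a$-independent factor $1\mp\ii\sin[2\pi\mf{b}]/\sinh(\be_{2a})$ according to which of the two groupings one is considering. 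Inserting $\mathrm{b)}$ then splits $p_n^{(\op{O})}(\be_2+\ii\pi,\bs{\be}_n'\mid\bs{\ell}_n)$ into its principal piece $g(\ell_1,\ell_2)p_{n-2}^{(\op{O})}(\bs{\be}_n''\mid\bs{\ell}_n'')$ and its inhomogeneous piece $h(\ell_1,\ell_2\mid\bs{\be}_n')$. The principal piece, when summed over $\bs{\ell}_n''$ with the residual weight $(-1)^{\ov{\bs{\ell}}_n''}\prod_{3\leq a<b\leq n}\{1-\ii\ell_{ab}\sin[2\pi\mf{b}]/\sinh(\be_{ab})\}$, reconstitutes $\mc{K}_{n-2}[p_{n-2}^{(\op{O})}](\bs{\be}_n'')$; the inhomogeneous piece cancels by an antisymmetric pairing of the $(0,1)$ and $(1,0)$ contributions enforced by the common parity of $(-1)^{\ov{\bs{\ell}}_n}$ and by the equality $g(0,1)=g(1,0)$. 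Combining the explicit value $g(0,1)=-1/[\sin(2\pi\mf{b})\op{F}(\ii\pi)]$ with the residue $-1$ of $1/\sinh(\be_{12})$ at $\be_{12}=\ii\pi$ produces the overall $\ii/\op{F}(\ii\pi)$ of $\mathrm{III)}$, while the $\be_{2a}$-dependent factor $\prod_{a\geq 3}(1-\ii\sin[2\pi\mf{b}]/\sinh(\be_{2a}))-\prod_{a\geq 3}(1+\ii\sin[2\pi\mf{b}]/\sinh(\be_{2a}))$ must be recast as $[1-\prod_{a\geq 3}\op{S}(\be_{2a})]/\prod_{a\geq 3}\{\op{F}(\be_{2a}+\ii\pi)\op{F}(\be_{2a})\}$ by invoking the functional identities satisfied by $\op{F}$ at $\be$ and at $\be+\ii\pi$, directly deducible from the integral representations \eqref{ecriture rep int pour F} and \eqref{ecriture rep int fct S}.

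The main technical obstacle is this last matching step in $\mathrm{III)}$: converting the collapsed product into the canonical $\op{F}/\op{S}$ ratio demands a careful orchestration of the $\op{F}$-functional relations around the singular rapidity difference $\ii\pi$. Everything else reduces to symmetric-group bookkeeping on the binary labels $\bs{\ell}_n$ together with straightforward residue computations.
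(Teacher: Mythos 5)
The paper does not actually prove this proposition --- it defers entirely to \cite{BabujianKarowskiExactFFSineGordonBootsstrapII} --- so your direct verification of $\mathrm{I)-IV)}$ is the natural route and is, in outline, the standard one. Your treatment of $\mathrm{I)}$, $\mathrm{II)}$ and $\mathrm{IV)}$ is fine, the collapse of the crossed factors via $\ell_a(1-\ell_a)=0$ is exactly right, and the final matching does rest on the two identities $\op{S}(\be)=\big(1-\ii\tfrac{\sin(2\pi\mf{b})}{\sinh \be}\big)\big/\big(1+\ii\tfrac{\sin(2\pi\mf{b})}{\sinh \be}\big)$ and $\op{F}(\be)\op{F}(\be+\ii\pi)=\sinh(\be)/\big(\sinh(\be)+\ii\sin(2\pi\mf{b})\big)$, which follow from \eqref{ecriture rep int pour F}. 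However, two points in your sketch are genuinely defective. First, your argument for $\mathrm{II)}$ leaves open poles at $\be_{ab}\in 2\ii\pi\mathbb{Z}$ (in particular at $\be_a=\be_b$): every summand with $\ell_a\neq\ell_b$ is singular there, yet $\mathrm{III)}$ asserts poles only at odd multiples of $\ii\pi$. These singularities cancel pairwise between $\bs{\ell}_n$ and the configuration with $\ell_a,\ell_b$ swapped --- the residue factor $-\ii\,\ell_{ab}\sin(2\pi\mf{b})/\sinh(\be_{ab})$ is odd under the swap while, at $\be_a=\be_b$, the sign $(-1)^{\ov{\bs{\ell}}_n}$, the function $p_n^{(\op{O})}$ (by $\mathrm{a)}$) and all remaining factors are even --- and this verification must be carried out.

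Second, your stated mechanism for killing the inhomogeneous $h$-piece is wrong. The $(0,1)$ and $(1,0)$ contributions do \emph{not} cancel against each other: they carry the distinct collapsed products $\pl{a\geq 3}{}\big(1-\ii\tfrac{\sin(2\pi\mf{b})}{\sinh(\be_{2a})}\big)$ versus $\pl{a\geq 3}{}\big(1+\ii\tfrac{\sin(2\pi\mf{b})}{\sinh(\be_{2a})}\big)$, there is no hypothesis forcing $h(0,1\mid\cdot)=h(1,0\mid\cdot)$, and the equality $g(0,1)=g(1,0)$ you invoke is irrelevant since $g$ does not multiply $h$. The correct mechanism, for $n\geq 3$, is that $h(\ell_1,\ell_2\mid\bs{\be}_n^{\prime})$ is independent of $\bs{\ell}_n^{\prime\prime}$, so within each fixed $(\ell_1,\ell_2)$ its contribution factors through
$\sul{\bs{\ell}_n^{\prime\prime}}{}(-1)^{\ov{\bs{\ell}}_n^{\prime\prime}}\pl{3\leq a<b}{}\big\{1-\ii\,\ell_{ab}\sin(2\pi\mf{b})/\sinh(\be_{ab})\big\} \,=\, \mc{K}_{n-2}[1](\bs{\be}_n^{\prime\prime})$,
and one must prove separately that the $\mc{K}$-transform annihilates constants, $\mc{K}_{m}[1]=0$ for $m\geq 1$ (an alternating-sum identity over $\{0,1\}^m$). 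This lemma is a necessary ingredient your sketch is missing, and the low case $n=2$, where no such sum is available, needs a separate (if trivial, since the right-hand side of $\mathrm{III)}$ then vanishes) discussion.
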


Note that arguments were given in \cite{FeiginLashkevichFreeFieldApproachToDescendents} in favour of some form of bijection between certain classes of solutions to $\mathrm{a)-c)}$ and $\mathrm{I)-IV)}$. 
However, we do stress that, so far, the question whether there does exist a clear cut correspondence between all solutions to $\mathrm{a)-c)}$ and $\mathrm{I)-IV)}$ is still open.

\section{Towards physical observables and the convergence problem}

The resolution of the bootstrap program provides one with the expressions for the integral kernels of certain operators which are candidates 
for the quantum fields of the 1+1 dimensional Sinh-Gordon quantum field theory. However, for this construction to really provide one with the quantum field theory
of interest, one should establish several facts. First of all, the operators so constructed should form an algebra in the sense discussed in Subsection \ref{SousSoussectionOperateursDeBase}. 
By virtue of the translational invariance \eqref{ecriture action adjointe operateur de translation}, this means that, for any $n, m \in \mathbb{N}$, the series 
of multiple integrals arising in the operator product $\op{U}_{\op{T}_{\bs{x}}}^{-1}\op{P}_{n} \op{O}_{1}(\bs{x}) \op{O}_{2}(\bs{0}) \op{P}_m$, where $\op{P}_{k}$
is the orthogonal projector on the $k$-particle Fock space, should converge in the weak sense. Namely, any sufficiently regular functions $  \bs{\alpha}_n \mapsto  f^{(n)}(\bs{\alpha}_n)$
and $\bs{\be}_m \mapsto g^{(m)}( \bs{\be}_m )$ belonging, resp., to a dense subset of $L^2(\R^{n}_{>})$ and  $L^2(\R^{m}_{>})$ and for any $d \in \mc{C}^{\infty}_{\e{c}}(\R^{2})$

\begin{multline}
\sul{\ell \geq 0 }{} \Int{\R^{\ell}_{>} }{} \hspace{-1mm} \f{ \dd^{\ell} \ga }{ (2\pi)^{\ell} } 
\bigg\{ \Int{\R^{n}_{>} }{} \hspace{-1mm}\f{ \dd^{n} \alpha }{ (2\pi)^{n} }  f^{(n)}( \bs{\alpha}_n)   \mc{M}_{n;\ell}^{(O_1)}(\bs{\alpha}_n;\bs{\ga}_{\ell}) \bigg\} 
  \\ 
\times \bigg\{ \Int{\R^2 }{} \dd^2  \bs{x} d(\bs{x}) \pl{a=1}{\ell} \ex{-\ii \bs{p}(\ga_a) \cdot \bs{x} }  \bigg\}  \cdot
\bigg\{ \Int{\R^{m}_{>} }{} \hspace{-1mm} \f{ \dd^{n} \be }{ (2\pi)^{m} } \mc{M}_{\ell;m}^{(O_2)}(\bs{\ga}_{\ell};\bs{\be}_m) g^{(m)}( \bs{\be}_m) \bigg\} \;.  
\end{multline}
should converge. The simplest case corresponds to establishing the convergence of the series of multiple integrals subordinate 
to operator products $\op{P}_0\op{O}_1(\bs{x}) \op{O}_2^{\prime}(\bs{0})\op{P}_0$, \textit{viz}. for $n=m=0$. Since the $0^{\e{th}}$ Fock space is one-dimensional, this exactly amounts to the convergence 
of the series of multiple integrals which represents the two-point generalised function $\big< \op{O}_1(\bs{x}) \op{O}_2^{\prime}(\bs{0})\big>$. 
However, even for this specific instance, proving this property on rigorous grounds remained an open 
problem for a very long time. It has only recently been solved by the author  \cite{KozConvergenceFFSeriesSinhGordon2ptFcts} in the case of space-like separation between the operators, \textit{viz}. $\bs{x}^2 <0$. 
The scheme of  proof of this result will be discussed in Section \ref{Section Proof of convergence}. 
From the proof's structure, it is rather clear that one can build on minor modifications of this method so as to 
establish convergence in the time-like regime, \textit{i.e.} when $\bs{x}^2 >0$, although this has not been done yet. Moreover, 
the combinatorial expressions for the kernels $\mc{M}_{n;m}^{(O)}(\bs{\alpha}_n;\bs{\be}_m)$ in terms of the base form factors $\mc{F}_{p}^{(O)}$, $0\leq p \leq m+n$
indicates that the method outlined below would also allow one to tackle the convergence problem for general multi-point correlation functions. 

Once that the convergence problem is solved in full generality, hence guaranteeing that the operators $\op{O}_{\alpha}(\bs{x})$ do form an algebra, 
one still needs to establish the local commutativity property of the quantum fields which ensures causality of the theory. The method for doing so is now well-established. 
Indeed, under the hypothesis of convergence of the handled series of multiple integrals issuing from the operators products, Kirillov and Smirnov 
showed this property in the Sine-Gordon case in \cite{KirillovSmirnovFirstCompleteSetBootstrapAxiomsForQIFT,KirillovSmirnovUseOfBootstrapAxiomsForQIFTToGetMassiveThirringFF}. 
Their method readily applies to the Sinh-Gordon case. Hence, convergence is the only remaining problem so as to set this construction of quantum field theories
on rigorous grounds.

\subsection{The well-poised series expansion for two-point functions}
 
 First of all, by translation invariance, it is enough to focus on  $\big< \op{O}_1(\bs{x}) \op{O}_2(\bs{0})\big>$. Recall that, at least in principle, this quantity is a generalised function 
and should thus be understood, in the first place, as the formal integral kernel of the distribution  $\big< \op{O}_1 \op{O}_2\big>$.  So that, for $d\in \mc{C}^{\infty}_{\e{c}}(\R^2)$, provided convergence holds, one has
\beq
\big< \op{O}_1 \op{O}_2\big>[d] \; = \; \Int{ \R^2 }{}\dd^2 \bs{x} d(\bs{x}) \big< \op{O}_1(\bs{x}) \op{O}_2(\bs{0})\big> \; = \; \sul{ n \geq 0 }{} \f{ 1 }{ n! } \mc{I}_n^{(\op{O}_1,\op{O}_2)}[d]
\enq
with 
\beq
\mc{I}_n^{(\op{O}_1,\op{O}_2)} [d] \, = \, \Int{  \R^n }{} \f{ \dd^n \be }{ (2\pi)^n } 
 \mc{F}_{n}^{(\op{O}_1)}(\bs{\be}_n)   \mc{M}_{n;0}^{(\op{O}_2)}(\bs{\be}_{n}; \emptyset ) 
 \Int{ \R^2 }{}\hspace{-1mm} \dd^2 \bs{x} d(\bs{x})   \pl{a=1}{n}  \Big\{ \ex{-\ii m [ t \cosh(\be_a) - x \sinh(\be_a)] }\Big\}   \;. 
\nonumber
\enq
It is a direct consequence of the kernel reduction axiom $\mathrm{v)}$ and of Lorentz invariance one $\mathrm{iv)}$ that 
\beq
\mc{M}_{n;0}^{(\op{O}_2)}(\bs{\be}_{n}; \emptyset )  \, = \, \mc{F}_{n}^{(\op{O}_2)}( \overleftarrow{\bs{\be}}_n+\ii\pi \ov{\bs{e}}_n )  \, = \, 
\ex{\ii \pi \op{s}_{\op{O}_2} } \mc{F}_{n}^{(\op{O}_2)}( \overleftarrow{\bs{\be}}_n  ) \;. 
\enq
This identity along with the growth bounds in each $\be_a$ of the 
 form factors  $\mc{F}_{n}^{(\op{O})}(\bs{\be}_n)$ ensures the well definiteness of the $n$-fold integrals since the space-time integral over $\bs{x}$ produces 
a decay in each $\be_a$ that is faster than any exponential $\ex{ \pm k \be_a }$, $\Re(\be_a)\rightarrow \pm \infty$. 
By virtue of the Morera theorem, this rapid decay at infinity along with the holomorphic properties of the integrands allow one to deform, \textit{simultaneously} for each integration variable $\be_a$, $a=1,\dots, n$ 
the integration curves to  $\R+\ii\tfrac{\pi}{2}\e{sgn}(x)$ when $\bs{x}$ is space-like and, when $\bs{x}$ is time-like, to $\ga(\R)$ where $\ga(u)=u+\ii\vartheta(u)$, 
where $\vartheta$ is smooth, $|\vartheta|<\tf{\pi}{4}$, and such that there exists $M>0$ large enough and $0<\eps < \tf{\pi}{2}$  so that $\vartheta(u)=-\e{sgn}(t) \e{sgn}(u) \eps$ when $|u|\geq M$.
This operation turns the $\bs{\be}_n$ integrals into absolutely convergent ones irrespectively of the presence of $d(\bs{x})$. In particular, for the space-like 
regime, one gets that 
\beq
\mc{I}_n^{(\op{O}_1,\op{O}_2)}  =  \ex{ \eta(x) }  \hspace{-2mm} \Int{ \R^2 }{}  \hspace{-1mm} \dd^2 \bs{x} d(\bs{x})   \hspace{-1mm} \Int{  \R^n }{}  \hspace{-1mm} 
\f{ \dd^n \be }{ (2\pi)^n } 
 \mc{F}_{n}^{(\op{O}_1)}(\bs{\be}_n)    \mc{F}_{n}^{(\op{O}_2)}( \overleftarrow{\bs{\be}}_n  ) 
   \pl{a=1}{n}     \ex{- m r   \cosh(\be_a)  }        \;,
\nonumber 
\enq
in which  $r = \sqrt{x^2-t^2}$,  $\tanh(\vartheta)=\tf{t}{x}$ while $\eta( x )=\ii \pi \op{s}_{\op{O}_2} \, + \, (\ii \tfrac{ \pi }{ 2 }  + \vartheta )(\op{s}_{\op{O}_1}+\op{s}_{\op{O}_2})\e{sgn}(x) $. Hence, provided convergence holds, one has the well-defined in the usual sense of numbers representation for the two-point function
\beq
\big< \op{O}_1(\bs{x}) \op{O}_2(\bs{0})\big> \; = \; \ex{ \eta(x) }  \sul{ n \geq 0 }{} \f{ 1 }{ n! }\Int{  \R^n }{}  \hspace{-1mm} 
\f{ \dd^n \be }{ (2\pi)^n } 
 \mc{F}_{n}^{(\op{O}_1)}(\bs{\be}_n)    \mc{F}_{n}^{(\op{O}_2)}( \overleftarrow{\bs{\be}}_n  ) 
   \pl{a=1}{n}     \ex{- m r   \cosh(\be_a)  } 
\label{ecriture developpement serie fct 2 pts}
\enq

\subsection{Convergence of series representation for two-point functions}

Thus, the well-definiteness of the two-point functions boils down to providing an appropriate upper bound for the below class of $N$-fold integrals for $\varkappa>0$
\beq
\mc{Z}_N(\varkappa)   =  \Int{  \R^N }{}  \hspace{-2mm}  \dd^N \! \be   \pl{a\not= b }{N }  \ex{ \f{1}{2} \mf{w}(\be_{ab}) }  \cdot  \pl{a=1}{N}  \Big\{ \ex{- 2 \varkappa  \cosh(\be_a) }\Big\} 
\mc{K}_{N}\big[p^{(\op{O}_1)}_N \big](\bs{\be}_N) \mc{K}_{N}\big[p^{(\op{O}_2)}_N \big]( \overleftarrow{\bs{\be}}_N)\;. 
\label{definition fonction de partition de depart}
\enq
The  two-body potential $\mf{w}$ is defined through  the relation $\op{F}(\la)\op{F}(-\la) \; = \; \ex{   \mf{w} (\la) } $. 

\begin{theorem}\cite{KozConvergenceFFSeriesSinhGordon2ptFcts}
 \label{Theorem borne sup sur fct partition}
 Assume that there exist $C_1, C_2$ and $k\in \mathbb{N}$ such that given $s\in \{1,2\}$
\beq
\big| p^{(\op{O}_s)}_N(\bs{\be}_N\mid \bs{\ell}_N  ) \big| \, \leq \, C_1^N \cdot \pl{a=1}{N}\ex{ C_2 \be_a^{k}} \quad for \; any \quad \bs{\ell}_{N} \in \{0,1\}^N
\enq
uniformly  in $N$. Then, it holds 
\beq
\big|  \mc{Z}_N(  \varkappa) \big| \, \leq \, \exp\bigg[ - \f{3\pi^2 \mf{b} \hat{\mf{b}} \cdot N^2 }{ 4 \cdot (\ln N)^3 }  \Big\{  1 \, + \, \e{O}\Big( \f{1}{\ln N } \Big) \Big\} \bigg]
\enq

\end{theorem}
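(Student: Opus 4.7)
The plan is to reduce $|\mc{Z}_N(\varkappa)|$ to a Coulomb-gas type partition function whose large-$N$ asymptotics can then be controlled via a mean-field/electrostatic-energy argument.

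First, inserting the explicit expansion of the two $\mc{K}$-transforms rewrites $\mc{Z}_N$ as a sum of $4^N$ integrals labelled by pairs of Boolean vectors $(\bs{\ell}_N,\bs{\ell}'_N)\in \{0,1\}^{N}\times \{0,1\}^{N}$. In each term the only potentially singular factors are the products $\pl{a<b}{N}(1-\ii\ell_{ab}\sin(2\pi\mf{b})/\sinh(\be_{ab}))$. Exploiting the holomorphy of $p^{(\op{O}_s)}_{N}$ in each variable together with its $2\ii\pi$-periodicity, I would deform each integration contour $\R\mapsto \R+\ii\theta_a$ with generic small shifts, simultaneously moving every difference $\be_{ab}$ away from zero and from $\ii\pi\mbb{Z}$. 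On the shifted contour each $\mc{K}$-transform factor becomes uniformly bounded, the one-body confinement becomes $\pl{a=1}{N}\ex{-2\varkappa\cos(\theta_a)\cosh(\be_a)}$, and the pairwise weight $\ex{\mf{w}(\be_{ab})/2}$ picks up only a harmless bounded modulation.

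The key feature that drives the decay of $\mc{Z}_N$ is that, on the real axis, the integral representation \eqref{ecriture rep int pour F} shows that $\op{F}$ has a simple zero at the origin, so that $\mf{w}(\lambda)=2\ln|\lambda|+\e{O}(1)$ as $\lambda\to 0$. Combining this logarithmic repulsion with the hypothesis $|p^{(\op{O}_s)}_N(\bs{\be}_N\mid\bs{\ell}_N)|\leq C_1^{N}\pl{a=1}{N}\ex{C_2\be_a^k}$, and noting that the $4^N=\ex{N\ln 4}$ combinatorial prefactor as well as the polynomial-in-$\be_a$ corrections coming from the $p_N$ bound are both subleading with respect to the claimed rate (since $N\ln N=\e{o}\!\left(N^2/(\ln N)^3\right)$), the problem reduces to providing an upper bound on a log-gas partition function of the form
\[
 \mc{P}_N(\varkappa) \; = \; \Int{\R^N}{} \dd^N\!\be \; \pl{a<b}{N}|\be_{ab}|^{2} \cdot \pl{a=1}{N}\ex{-\varkappa'\cosh(\be_a)} \,,
\]
for some $\varkappa'>0$. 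I would treat $\mc{P}_N(\varkappa)$ by rescaling $\be_a=L_N x_a$ with $L_N\asymp \ln N$, expressing the exponent in terms of the empirical measure $\mu_N=\tfrac{1}{N}\sul{a=1}{N}\delta_{x_a}$, and performing a saddle-point/large-deviation analysis of the resulting energy functional. The logarithmic repulsion spreads the equilibrium measure over an interval of length $\asymp \ln N$, while the $\cosh$-confinement forces configurations to live in that very interval; a careful analysis of this balance through the Euler--Lagrange equation of the effective energy then yields the claimed rate $N^2/(\ln N)^3$ with the precise constant $3\pi^2\mf{b}\hat{\mf{b}}/4$, the factor $\mf{b}\hat{\mf{b}}$ arising from the small-$x$ behavior of the integrand in \eqref{ecriture rep int pour F} through $\sinh(x\mf{b})\sinh(x\hat{\mf{b}})\sim \mf{b}\hat{\mf{b}}\,x^{2}$.

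The main obstacle I anticipate is the tight handling of the $\mc{K}$-transform factors in the reduction step. A naive triangle inequality across the $4^N$ Boolean terms would throw away oscillatory cancellations and produce a strictly weaker bound; to recover the announced constant one is led to perform partial resummations over the Boolean indices---for instance along the lines of the determinantal/Pfaffian-type reorganizations of $\mc{K}_N$ that underpin the free-field constructions of \cite{BabujianKarowskiExactFFSineGordonBootsstrapII,FeiginLashkevichFreeFieldApproachToDescendents}---before passing to the mean-field bound. A secondary technical difficulty is that the standard large-deviation theory for logarithmic gases is formulated for polynomially growing one-body potentials, whereas here $\cosh$ grows exponentially; the equilibrium measure analysis therefore needs adaptation, and the optimization has to be carried out with enough precision to isolate the $1/(\ln N)^3$ factor from the more natural candidate scalings $1/\ln N$ or $1/(\ln N)^2$.
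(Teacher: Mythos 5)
Your reduction to the log-gas $\mc{P}_N$ is where the argument breaks down. Replacing $\ex{\mf{w}(\be_{ab})}$ by its short-distance envelope $|\be_{ab}|^{2}$ does give a pointwise upper bound (up to constants), but a vacuous one: $\prod_{a<b}|\be_{ab}|^{2}$ is a long-range, confining log-gas repulsion, and the resulting integral with weight $\prod_a \ex{-\varkappa'\cosh(\be_a)}$ \emph{grows} like $\exp\{+N^2\ln\ln N\,(1+\e{o}(1))\}$, since the particles spread over $[-\ln N,\ln N]$ and $\iint \ln|s-u|\,\dd\mu\,\dd\mu \sim \ln\ln N$ dominates the one-body cost. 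No Euler--Lagrange analysis of that functional can return a decaying bound. The mechanism actually responsible for the decay $\exp\{-cN^2/(\ln N)^3\}$ is not the zero of $\op{F}$ at the origin but the long-distance, integrated structure of the pair potential: $\mf{w}\to 0$ at infinity and $\int_{\R}\mf{w}=-4\pi^2\mf{b}\hat{\mf{b}}<0$, and, crucially, once the poles of the $\mc{K}$-transform factors at $\be_{ab}=0$ are absorbed into the two-body weight (producing the corrections $\mf{v}_{2\pi\mf{b},0^+}$ and the two-species integrals $\msc{Z}_{N,p}$), the effective interaction $\mf{w}^{(+)}$ has a Fourier transform vanishing at zero frequency and negative-definite nearby. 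For $N$ particles confined to a window of width $\sim\ln N$, whose empirical measure has Fourier content at frequencies $\lesssim 1/\ln N$, this yields a mean-field energy of order $N^2/(\ln N)^3$ rather than $N^2/\ln N$. Discarding the long-range part of $\mf{w}$, as your reduction does, destroys exactly this cancellation; your attribution of the constant $\mf{b}\hat{\mf{b}}$ to the small-$x$ behaviour of the integrand in \eqref{ecriture rep int pour F} is in fact a statement about the $\la\to\infty$ (zero-frequency) regime of $\mf{w}$, which your own bound has thrown away.

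Two further points. Your plan for the $4^N$ Boolean terms --- contour deformation plus ``partial resummations to recover oscillatory cancellations'' --- is both vaguer and stronger than needed: the paper shows that a crude bound costing only a factor $(C\ln N)^N=\ex{\e{O}(N\ln\ln N)}$ suffices, \emph{provided} the poles of the factors $1-\ii\,\ell_{ab}\sin(2\pi\mf{b})/\sinh(\be_{ab})$ are retained and combined with the double zero of $\ex{\mf{w}}$ at $\be_{ab}=0$ to form the modified potentials $\mf{w}_{\e{tot}}$; treating them as a ``harmless bounded modulation'' after a contour shift again loses the very contribution that fixes the constant $\tfrac{3}{4}\pi^2\mf{b}\hat{\mf{b}}$ (and the power of $\ln N$). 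Finally, because the true interaction is short-ranged, the equilibrium problem is governed by a finite-interval convolution (truncated Wiener--Hopf) operator, not by the finite Hilbert transform: its inversion requires the Novokshenov-type $2\times2$ Riemann--Hilbert analysis and the Deift--Zhou asymptotics carried out in the paper, and is not accessible through the classical log-gas machinery you invoke.
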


The proof of this theorem was the goal of the author's work \cite{KozConvergenceFFSeriesSinhGordon2ptFcts}. The proof relies on Riemann--Hilbert problem techniques for inverting singular integral operators
of truncated-Wiener Hopf type along with the Deift-Zhou non-linear steepest descent method \cite{DeiftZhouSteepestDescentForOscillatoryRHP,DeiftZhouSteepestDescentForOscillatoryRHPmKdVIntroMethod}, 
 concentration of measure and large deviation techniques
which were developed for dealing with certain 
$\be$-ensembles multiple integrals \cite{BoutetdeMonvelPasturShcherbinaAPrioriBoundsOnFluctuationsAroundEqMeas,BenArousGuionnetLargeDeviationForWignerLawLEadingAsymptOneMatrixIntegral,
MaidaMaurelSegalaInegalitesPourConcentrationDeMesures}, and 
some generalisations thereof to the case of $N$-dependent integrands in $N$-dimensional integrals as it was developed in \cite{KozBorotGuionnetLargeNBehMulIntOfToyModelSoVType}.

\section{The proof of the convergence of the form factor series}
\label{Section Proof of convergence}
In this section we shall describe the main steps of the proof. The details can be found in Proposition 3.1 of \cite{KozConvergenceFFSeriesSinhGordon2ptFcts}.

\subsection{An simpler upper bound}

The starting point consists in obtaining a structurally simpler upper bound on $ \mc{Z}_N(  \varkappa) $ when $\varkappa>0$. 

\begin{prop}
 There exists C>0 such that 
\beq
\big|  \mc{Z}_N(  \varkappa) \big| \, \leq \,  \Big( C  \cdot  \ln N  \Big)^N \cdot  \e{max}_{p\in \intn{0}{N}} \big| \msc{Z}_{N,p}(\varkappa) \big| 
\enq
 where $\msc{Z}_{N,p}( \varkappa )  \, = \,    \Int{\R^{N-p} }{} \hspace{-2mm}  \dd^{N-p} \la    \Int{  \R^p }{} \hspace{-1mm} \dd^p \nu  \;  \ov{\varrho}_{N,p}\big(\bs{\la}_{N-p}, \bs{\nu}_p \big)$
whose integrand is expressed as 
\begin{multline}
 \ov{\varrho}_{N,p}\big(\bs{\la}_{N-p}, \bs{\nu}_p \big) \; = \; 
\pl{a=1}{p}  \Big\{ \ex{- V_N(\nu_a) }\Big\} \cdot  \pl{a=1}{N-p}  \Big\{ \ex{-  V_N(\la_a) }\Big\}  \\
\times \pl{a< b }{p } \Big\{ \ex{  \mf{w}_N(\nu_{ab}) } \Big\}   
\cdot \pl{a< b }{ N - p } \Big\{ \ex{  \mf{w}_N(\la_{ab}) } \Big\}  \cdot \pl{a=1}{p} \pl{b=1}{N-p}   \bigg\{  \ex{  \mf{w}_{\e{tot};N}(\nu_{a}-\la_{b}) }    \bigg\}  \;. 
\label{definition densite integrande de Z N p majorant}
\end{multline}
Above, there appears the $N$-dependent functions
\beq
V_N(\la) \, = \, \varkappa \cosh(\tau_N \la) \quad , \quad  \mf{w}_{N}(\la) \, = \, \mf{w}(\tau_N \la) \quad, \quad  \mf{w}_{\e{tot};N}(\la) \, = \, \mf{w}_{\e{tot}}(\tau_N \la) \;, 
\enq
with $\tau_N=\ln N$ and one has 
\beq
 \mf{w}_{\e{tot}}(\la)  \, = \, \mf{w}(\la) \, + \, \mf{v}_{2\pi \mf{b}, 0^+}(\la) \quad with \quad 
\mf{v}_{ \alpha , \eta}(\la) \, = \, \ln \Bigg(  \f{  \sinh(\la + \ii \alpha) \, \sinh(\la - \ii\alpha) }{  \sinh(\la + \ii \eta) \, \sinh(\la - \ii \eta ) }   \Bigg) \;. 
\label{definition potentiel varpi tot et correctif v alpha eta}
\enq

\end{prop}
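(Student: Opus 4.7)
The plan is to expand both $\mc{K}$-transforms appearing in \eqref{definition fonction de partition de depart} via the formula of the preceding proposition, then combine two applications of the Cauchy--Schwarz inequality with a rescaling by $\tau_N=\ln N$ so as to bring the integrand into the structured form $\ov{\varrho}_{N,p}$ of \eqref{definition densite integrande de Z N p majorant}.

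\textbf{Step 1.} I would take absolute values and the triangle inequality to bound each $|\mc{K}_{N}[p_N^{(\op{O}_s)}](\bs{\beta}_N)|$ by the sum over $\bs{\ell}^{(s)}\in\{0,1\}^N$ of $\prod_{a<b}|1-\ii\ell^{(s)}_{ab}\sin(2\pi\mf{b})/\sinh(\beta_{ab})|\cdot|p_N^{(\op{O}_s)}(\bs{\beta}_N\mid\bs{\ell}^{(s)})|$, and use the growth hypothesis to insert $|p_N^{(\op{O}_s)}|\le C_1^N\prod_a e^{C_2|\beta_a|^k}$. Since $\cosh$ dominates any polynomial at infinity, there is an $N$-independent $M>0$ such that $2C_2|\beta|^k-2\varkappa\cosh(\beta)\le-\varkappa\cosh(\beta)+M$, which absorbs the polynomial growth at the cost of an overall factor $e^{MN}$.

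\textbf{Step 2.} I would handle the product of the two $\mc{K}$-transforms via $|AB|\le\tfrac{1}{2}(|A|^2+|B|^2)$; reversal symmetry of the remaining weight (using that $\mf{w}$ is even and $\prod_a e^{-2\varkappa\cosh(\beta_a)}$ is permutation-invariant) lets me replace $|\mc{K}_{N}[p_N^{(\op{O}_2)}](\overleftarrow{\bs{\beta}}_N)|^2$ by $|\mc{K}_{N}[p_N^{(\op{O}_2)}](\bs{\beta}_N)|^2$ under the integral. A second Cauchy--Schwarz on the sum over $\bs{\ell}\in\{0,1\}^N$, which has $2^N$ terms, produces the prefactor $2^N$ together with $\prod_{a<b}|1-\ii\ell_{ab}\sin(2\pi\mf{b})/\sinh(\beta_{ab})|^2$. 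The identity $\sinh(\beta+2\ii\pi\mf{b})\sinh(\beta-2\ii\pi\mf{b})=\sinh^2(\beta)+\sin^2(2\pi\mf{b})$ combined with the definition \eqref{definition potentiel varpi tot et correctif v alpha eta} yields $|1-\ii\sin(2\pi\mf{b})/\sinh(\beta)|^2=\exp[\mf{v}_{2\pi\mf{b},0^+}(\beta)]$, so this squared modulus collapses to $\prod_{a<b:\,\ell_{ab}=1}\exp[\mf{v}_{2\pi\mf{b},0^+}(\beta_{ab})]$.

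\textbf{Step 3.} I now rescale $\beta_a=\tau_N\lambda_a$ with $\tau_N=\ln N$, picking up the Jacobian $(\ln N)^N$; by the definitions of $V_N$, $\mf{w}_N$ and $\mf{w}_{\e{tot};N}$, pairs with $\ell_{ab}=1$ acquire exactly the extra weight turning $\mf{w}_N(\lambda_{ab})$ into $\mf{w}_{\e{tot};N}(\lambda_{ab})$. The sum over $\bs{\ell}$ is reorganised as a sum over ordered bipartitions $\{1,\dots,N\}=A\sqcup B$ with $|A|=p$: the variables labelled by $A$ become the $\nu_1,\dots,\nu_p$, those by $B$ become the $\lambda_1,\dots,\lambda_{N-p}$, and $\ell_{ab}=1$ precisely when $a,b$ lie in different blocks. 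Permutation symmetry makes every bipartition of size $p$ contribute the same integral $\msc{Z}_{N,p}(\varkappa)$, so $\sum_{\bs{\ell}}\int\cdots=\sum_{p=0}^N\binom{N}{p}\msc{Z}_{N,p}(\varkappa)\le 2^N\max_p|\msc{Z}_{N,p}(\varkappa)|$. Collecting the prefactors $2^{2N}C_1^{2N}e^{MN}(\ln N)^N$ and setting $C=4C_1^2 e^M$ yields the claim.

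The main obstacle is not any single analytic step but the combinatorial bookkeeping: one has to verify that the convention for $\ell_{ab}$ in the $\mc{K}$-transform is of XOR type, i.e.\ $\ell_{ab}=1$ exactly for cross-block pairs, which is what reproduces the within-block potential $\mf{w}_N$ and the cross-block potential $\mf{w}_{\e{tot};N}$ of \eqref{definition densite integrande de Z N p majorant} and no spurious extra weights. Once this identification is pinned down, every step above reduces to routine estimates.
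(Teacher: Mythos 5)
Your proposal is correct and follows exactly the route the paper intends (the proof itself is deferred to Proposition 3.1 of the cited reference): expand the two $\mc{K}$-transforms over $\bs{\ell}\in\{0,1\}^N$, use $\big|1-\ii\,\ell_{ab}\sin(2\pi\mf{b})/\sinh(\be_{ab})\big|^2=\exp\big[\ell_{ab}^2\,\mf{v}_{2\pi\mf{b},0^+}(\be_{ab})\big]$, absorb the $\ex{C_2\be^k}$ growth into half of the $\cosh$ weight, rescale by $\tau_N=\ln N$, and resum over bipartitions. The combinatorial point you flag resolves as you suspect: with the paper's convention $\ell_{ab}=\ell_a-\ell_b$, the factor is nontrivial precisely when $\ell_a\neq\ell_b$, so the extra weight $\mf{v}_{2\pi\mf{b},0^+}$ attaches exactly to cross-block pairs, turning $\mf{w}_N$ into $\mf{w}_{\e{tot};N}$ there and nowhere else.
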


\subsection{Energetic bounds}

\begin{prop}
 
\label{Proposition estimation comportement gd N Z N p majorant}

The partition function $\msc{Z}_{N,p}( \varkappa )$ admits the upper bound:
\beq
\msc{Z}_{N,p}( \varkappa )  \,  \leq \,  \exp\bigg\{ -N^2   \e{inf} \Big\{ \,  \mc{E}_{N,\frac{p}{N}}[\mu,\nu]  \; : \;  (\mu,\nu)\in  \mc{M}^{1}(\R)\times \mc{M}^{1}(\R)   \Big\} \; + \; \e{O}\big( N \tau_N^2 \big)  \bigg\} \,,  
\label{ecriture borne sup sur fct part majorante}
\enq
in which the control is uniform in $p\in \intn{0}{N}$ and where  
\begin{multline}
\nonumber 
\mc{E}_{N,t}[\mu,\nu]\; = \; \f{1}{N} \bigg\{ t \Int{}{}\hspace{-1mm}  V_N(s) \dd \nu(s) \, + \,  (1-t) \Int{}{} \hspace{-1mm} V_N(s) \dd \mu(s) \bigg\}  -  
\f{t^2}{2}\!  \Int{}{} \hspace{-1mm}  \mf{w}_N(s-u)  \dd \nu(s) \dd \nu(u)  \\ 
 \; - \; \f{ (1-t)^2}{2} \Int{}{} \mf{w}_N(s-u)   \dd \mu(s) \dd \mu(u)  \, - \, t(1-t)  \Int{}{} \mf{w}_{\e{tot};N}(s-u)   \dd \mu(s) \dd \nu(u)   \;. 
\label{definition fonctionelle E N t}
\end{multline}
\end{prop}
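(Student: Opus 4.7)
The claim is a large-deviation upper bound for a two-species log-gas, directly analogous to the $\beta$-ensemble upper bound of \cite{BenArousGuionnetLargeDeviationForWignerLawLEadingAsymptOneMatrixIntegral,BoutetdeMonvelPasturShcherbinaAPrioriBoundsOnFluctuationsAroundEqMeas}, but complicated by the $N$-dependent rescaling $\tau_N = \ln N$ of all one- and two-body potentials, in the spirit of \cite{KozBorotGuionnetLargeNBehMulIntOfToyModelSoVType}. The strategy is to identify the integrand as a Boltzmann weight of the energy $\mc{E}_{N,p/N}$ evaluated on empirical measures, to regularise those measures so as to legally invoke the variational lower bound defining $\inf \mc{E}_{N,p/N}$, and to carefully account for the errors this regularisation introduces.

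First, introducing the empirical measures $\hat{L}^{(\nu)} = \f{1}{p}\sul{a=1}{p}\de_{\nu_a}$ and $\hat{L}^{(\mu)} = \f{1}{N-p}\sul{a=1}{N-p}\de_{\la_a}$, I would rewrite $\log \ov{\varrho}_{N,p}$ in terms of integrals against $\hat{L}^{(\mu)}, \hat{L}^{(\nu)}$. Since the products $\pl{a<b}{}$ omit the diagonals while the continuous functional $\mc{E}_{N,p/N}$ does not, this yields $\log \ov{\varrho}_{N,p} = -N^2 \mc{E}_{N,p/N}[\hat{L}^{(\mu)}, \hat{L}^{(\nu)}]$ plus self-interaction contributions of size $\e{O}(N)$, the potential $\mf{w}$ being regular at the origin. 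Exploiting the super-exponential growth of $V_N(\la) = \varkappa \cosh(\tau_N \la)$, one then truncates the integration to a slowly expanding compact window $[-R_N, R_N]^N$ at a cost negligible relative to $\e{e}^{-N\tau_N^2}$, so that thereafter all empirical measures are probability measures on a bounded set.

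Next, one cannot invoke $\mc{E}_{N,p/N} \geq \inf \mc{E}_{N,p/N}$ directly on the atomic empirical measures, because the singular double integrals entering the functional are not well-defined on atoms. I would therefore regularise by convolving $\hat{L}^{(\mu)}, \hat{L}^{(\nu)}$ with a smooth bump $\chi_{\delta_N}$ of width $\delta_N \sim N^{-1}$, producing absolutely continuous measures in $\mc{M}^{1}(\R)$ on which the variational lower bound applies. The core technical estimate is the comparison
\begin{equation*}
\mc{E}_{N,p/N}\big[\hat{L}^{(\mu)}, \hat{L}^{(\nu)}\big] \; \geq \; \mc{E}_{N,p/N}\big[\hat{L}^{(\mu),\delta_N}, \hat{L}^{(\nu),\delta_N}\big] \, - \, \e{O}\!\left(\f{\tau_N^2}{N}\right) \;.
\end{equation*}
Combining this with the volume factor $\e{e}^{\e{O}(N\tau_N)}$ coming from the truncated domain, and with the $\e{O}(N)$ self-interaction error from the first step, yields the announced bound \eqref{ecriture borne sup sur fct part majorante} uniformly in $p$.

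The main obstacle will be this comparison inequality. The rescaling $\la \mapsto \tau_N \la$ in $\mf{w}_N$ and $\mf{w}_{\e{tot};N}$ amplifies each derivative by a factor of $\tau_N$, so the natural smoothing error per pair of particles carries a $\tau_N^2$ factor; at the same time the integrable logarithmic singularity of $\mf{w}_{\e{tot}}$ at the origin (inherited from the $\sinh(\la \pm \ii 0^+)^{-1}$ factors in $\mf{v}_{2\pi\mf{b}, 0^+}$) must be resolved by the smoothing at scale $\delta_N$. Balancing these competing effects to produce the stated $\tau_N^2/N$ bound, uniformly for $p \in \intn{0}{N}$ and including the degenerate endpoints $p \in \{0,N\}$ where the cross-species interaction disappears, is the delicate part of the argument; a subsidiary technical task is to check that $\inf \mc{E}_{N,p/N}$ is indeed finite and that its minimising pair can be approximated by sufficiently regular densities for the smoothing comparison to be saturated.
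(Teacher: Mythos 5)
Your overall plan --- rewrite the integrand through the empirical measures of the two species, truncate using the confining growth of $V_N$, regularise the atomic measures \`{a} la Ma\"{i}da--Maurel-Segala so that the variational lower bound $\mc{E}_{N,p/N}\geq \inf\mc{E}_{N,p/N}$ becomes legitimate, and track the extra $\tau_N$ factors coming from the rescaling as in the sinh-model analysis --- is exactly the route the paper indicates (it only sketches the argument and defers to Lemmata 4.2--4.3 of the cited work). The gap is in your singularity bookkeeping, which is backwards. Since $\op{S}(0)=-1$, the exchange relation forces $\op{F}(0)=\op{S}(0)\op{F}(0)=0$, with a simple zero; hence $\ex{\mf{w}(\la)}=\op{F}(\la)\op{F}(-\la)$ vanishes to second order at the origin and $\mf{w}(\la)\sim 2\ln|\la|\rightarrow -\infty$ as $\la\rightarrow 0$. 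In $\mf{w}_{\e{tot}}=\mf{w}+\mf{v}_{2\pi\mf{b},0^+}$ the denominator $\sinh(\la+\ii 0^+)\sinh(\la-\ii 0^+)$ of $\mf{v}$ contributes $-2\ln|\la|$ and \emph{cancels} this divergence, so $\ex{\mf{w}_{\e{tot}}}$ is bounded above and away from zero near the origin: the cross-species interaction is the tame one, and the intra-species interaction $\mf{w}_N$ is the singular one.

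This invalidates two steps as written. First, the claim that passing from $\sum_{a<b}$ to the full double integral against the atomic empirical measures only costs ``self-interaction contributions of size $\e{O}(N)$, the potential $\mf{w}$ being regular at the origin'' fails: the diagonal terms $\mf{w}_N(0)$ equal $-\infty$ and the identity is of the form $\infty-\infty$. Second, the ``delicate part'' you isolate at the end is aimed at the wrong term; the smoothing at scale $\delta_N$ has nothing to resolve in $\mf{w}_{\e{tot};N}$ and must instead absorb the logarithmic repulsion in $\mf{w}_N$. The situation is in fact more favourable than you fear: because $\ex{\mf{w}}$ \emph{vanishes} at coincident points the singularity is repulsive, which is the good direction for an upper bound (an attractive $+\infty$ singularity in the cross term, as your reading would imply, would genuinely threaten the finiteness of $\inf\mc{E}_{N,t}$). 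The convolution step then costs $\e{O}\big(N\,\tau_N\ln(1/\delta_N)\big)=\e{O}(N\tau_N^2)$ for $\delta_N$ polynomial in $1/N$, consistent with the stated remainder, but the error analysis has to be relocated from $\mf{w}_{\e{tot};N}$ to $\mf{w}_N$ for the proof to close.
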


One may obtain such an upper bound within the standard approach to establishing large deviation bounds for $N$-fold integrals as pioneered in \cite{BenArousGuionnetLargeDeviationForWignerLawLEadingAsymptOneMatrixIntegral}, 
adjoined to the local regularisation of the empirical distribution of the integration variables proposed in \cite{MaidaMaurelSegalaInegalitesPourConcentrationDeMesures}
and some fine bounds due to the $N$-dependence of the integrand which were also considered in \cite{KozBorotGuionnetLargeNBehMulIntOfToyModelSoVType}. The details
can be found in Lemmata 4.2-4.3 of \cite{KozConvergenceFFSeriesSinhGordon2ptFcts}.

\subsection{Characterisation of the minimiser and a lower-bound minimiser}

The upper bound established in Proposition \ref{Proposition estimation comportement gd N Z N p majorant} does not allow one to conclude directly on the convergence of the series. Indeed, 
even if one could prove that the infimum in \eqref{ecriture borne sup sur fct part majorante} gives a strictly positive number, the $N$-dependence of the energy functional 
could make the infimum $N$-dependent and, in principle, the latter could give rise to a behaviour in $N$ which, when multiplied by the $N^2$ prefactor, could turn out to be subdominant in respect to 
the corrections $\e{O}(N \tau_N^2)$. Hence, the longest part of the proof is devoted to obtaining some sharp and explicit lower bound for the infimum which can then be computed in closed form
so that one may explicitly check that the above scenario does hold. 

For that purpose,  one starts by showing the

 \begin{prop}
 \label{Proposition minimiseur unique pour mathcal E de N et t} 
  For $0<t<1$ $\mc{E}_{N,t}$ admits a unique minimiser $( \mu_{\e{eq}}^{(N,t)}, \nu_{\e{eq}}^{(N,t)} )$ on $\mc{M}^{1}(\R)\times \mc{M}^{1}(\R)$. Similarly, $\mc{E}_{N,0}$  and $\mc{E}_{N,1}$ 
  admit unique minimisers on $\mc{M}^{1}(\R)$.
 
 \end{prop}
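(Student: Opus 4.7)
The plan is to apply the direct method of the calculus of variations, in the classical scheme of equilibrium-measure theory for $\be$-ensembles, adapted to the two-measure setting dictated by the splitting of variables in \eqref{definition densite integrande de Z N p majorant}.

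First, I would establish lower semi-continuity and coercivity of $\mc{E}_{N,t}$ on $\mc{M}^{1}(\R)\times \mc{M}^{1}(\R)$ endowed with the weak-$*$ topology. The potential contribution is lower semi-continuous because $V_{N}(\la)=\varkappa\cosh(\tau_{N}\la)$ is continuous and bounded below. The only singularities of the interaction kernels $\mf{w}_{N}$ and $\mf{w}_{\e{tot};N}$ on the real line are logarithmic at the origin, the latter arising through the $0^{+}$ limit in \eqref{definition potentiel varpi tot et correctif v alpha eta}; truncation combined with monotone convergence then yields lower semi-continuity of the two-body interaction terms. The boundedness of both kernels at infinity, as opposed to the exponential growth of $V_{N}$, provides coercivity: any minimising sequence satisfies uniform Chebyshev-type bounds forcing tightness, hence weak-$*$ sequential precompactness. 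A standard application of the direct method then produces at least one minimiser for each $0<t<1$; for $t\in\{0,1\}$ the functional reduces to a one-component equilibrium problem on $\mc{M}^{1}(\R)$ and is handled analogously.

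The main obstacle is uniqueness, which I would derive from strict convexity of $\mc{E}_{N,t}$. Since the potential contributions are linear along any affine path of pairs of probability measures, strict convexity reduces to strict positivity, on pairs of signed measures of zero total mass, of the bilinear form
\begin{equation*}
Q[\mu',\nu']\,=\,(1-t)^{2}\!\iint\!(-\mf{w}_{N})(s-u)\,\dd\mu'(s)\dd\mu'(u)\,+\,t^{2}\!\iint\!(-\mf{w}_{N})(s-u)\,\dd\nu'(s)\dd\nu'(u)\,+\,2t(1-t)\!\iint\!(-\mf{w}_{\e{tot};N})(s-u)\,\dd\mu'(s)\dd\nu'(u).
\end{equation*}
Taking Fourier transform in $s-u$ converts $Q[\mu',\nu']$ into an integral over frequencies $k$ of the pointwise quadratic form in $(\wh{\mu'}(k),\wh{\nu'}(k))$ with matrix
\begin{equation*}
M_{N}(k)\,=\,\begin{pmatrix} (1-t)^{2}\,A_{N}(k) & t(1-t)\,B_{N}(k) \\ t(1-t)\,B_{N}(k) & t^{2}\,A_{N}(k) \end{pmatrix},
\end{equation*}
in which $A_{N}(k)$ and $B_{N}(k)$ denote the Fourier symbols of $-\mf{w}_{N}$ and $-\mf{w}_{\e{tot};N}$. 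Strict positivity of $Q$ thus reduces to the pointwise inequalities $A_{N}(k)>0$ and $A_{N}(k)^{2}>B_{N}(k)^{2}$ for $0<t<1$, which I would extract from the integral representations \eqref{ecriture rep int pour F} and \eqref{definition potentiel varpi tot et correctif v alpha eta}: the first follows by direct computation of the Fourier transform of $-\mf{w}$, while the second reflects that the correction $\mf{w}_{\e{tot}}-\mf{w}=\mf{v}_{2\pi\mf{b},0^{+}}$ has a non-negative Fourier symbol strictly dominated in modulus by that of $-\mf{w}$.

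This Fourier-positivity step is the most delicate part of the argument, as it hinges on a sign-sensitive comparison between $\mf{w}$ and the correction $\mf{v}_{2\pi\mf{b},0^{+}}$; once strict positivity of $Q$ is established, uniqueness follows by the usual convexity argument: evaluating $\mc{E}_{N,t}$ at the midpoint of the segment joining two putative distinct minimisers would strictly lower the value of the functional, a contradiction. The endpoint cases $t\in\{0,1\}$ are handled by the scalar positivity $A_{N}(k)>0$ alone, which encodes the strict convexity of the standard one-measure functional $\mu\mapsto \tfrac{1}{N}\!\int\!V_{N}\dd\mu-\tfrac{1}{2}\!\iint\!\mf{w}_{N}(s-u)\,\dd\mu(s)\dd\mu(u)$.
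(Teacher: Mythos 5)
Your proposal is correct and follows essentially the same route as the paper: existence via lower semi-continuity and compactness of level sets, uniqueness via strict convexity of the two-body part, checked through Fourier positivity of the interaction kernels. Your pointwise conditions $A_N(k)>0$ and $A_N(k)^2>B_N(k)^2$ are exactly equivalent to the paper's diagonalisation $\mc{E}_{N,t}[\mu,\nu]=\mc{E}_N^{(+)}\big[t\nu+(1-t)\mu\big]+\mc{E}_N^{(-)}\big[t\nu-(1-t)\mu\big]$ combined with the positivity of the Fourier symbols of the kernels $-\mf{w}^{(\pm)}$, since $A_N\pm B_N$ are precisely (twice) those symbols.
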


This is established by showing that, for $0<t<1$, $\mc{E}_{N,t}$ is lower semi-continuous and strictly convex on $\mc{M}^{1}(\R)\times \mc{M}^{1}(\R)$, has compact level sets, is not identically $+\infty$ and is bounded from below. 
In principle, this result could be already enough to obtain sharp in $N$ estimates for $\mc{E}_{N,t}[ \mu_{\e{eq}}^{(N,t)}, \nu_{\e{eq}}^{(N,t)} ]$. Indeed, by relying on the analogous to the case of $\be$-ensembles variational characterisation of the minimisers and showing that these are actually Lebesgue continuous with compact connected supports, one may establish a system of two-coupled singular linear integral equations of truncated Wiener-Hopf type depending on the large-parameter $N$. These may be analysed within the method developed by Krein's school after generalising the work \cite{NovokshenovSingIntEqnsIntervalGeneral} 
and solving the $4\times 4$ associated Riemann--Hilbert problem in the large-$N$ regime by the Deift-Zhou non-linear steepest descent method \cite{DeiftZhouSteepestDescentForOscillatoryRHP,DeiftZhouSteepestDescentForOscillatoryRHPmKdVIntroMethod}. However, these steps would definitely lead to an extremely cumbersome and long clamber, especially taken the 
minimal amount of information one needs, in the end, from such handlings. Therefore, it is more convenient to reduce the numbers of minimisers which ought to be thoroughly determined
by providing a lower bound for $\mc{E}_{N,t}\big[ \mu_{\e{eq}}^{(N,t)}, \nu_{\e{eq}}^{(N,t)} \big]$ whose estimation would demand less effort while still leading to the desired result.

A direct calculation shows that one has a simpler representation for $\mc{E}_{N,t}$ in terms of functionals only acting on one copy of a 
space of bounded measures:
\beq
\mc{E}_{N,t}\big[\mu,\nu\big] \; = \; \sul{\ups = \pm }{}   \mc{E}_{N}^{(\ups)}\big[ \sg^{(\ups)}_t \big] \qquad  \e{with}  \qquad 
\sg^{(\pm)}_{t} \, = \, t \nu \pm (1-t) \mu \;,  
\enq
in which $\mc{E}_{N}^{(+)}$ is a functional on $\mc{M}^{1}(\R)$ while $\mc{E}_{N}^{(-)}$  is a functional on $\mc{M}^{(2t-1)}_{\mf{s}}(\R)$, the space of signed, bounded, measures on $\R$ of total mass $2t-1$. 
These take the form 
\begin{eqnarray}
\mc{E}_{N}^{(+)}\big[\sg \big]  & = & \f{ 1 }{ N }   \Int{}{} V_N(s)\,  \dd \sg(s) \, - \, \f{ 1 }{ 2 } \Int{}{} \mf{w}_N^{(+)}(s-u) \cdot \dd \sg(s) \, \dd \sg(u) \, ,  \label{definition fonctionnelle energie +}\\
\mc{E}_{N}^{(-)}\big[ \sg \big]  & = &  - \f{ 1 }{ 2 } \Int{}{} \mf{w}_N^{(-)}(s-t) \cdot \dd \sg(s) \, \dd \sg(u) \;  . 
\end{eqnarray}
The two-body interactions appearing above involve $\mf{w}$ and $\mf{v}_{\alpha,\eta}$ introduced in \eqref{definition fonction de partition de depart} and 
\eqref{definition potentiel varpi tot et correctif v alpha eta} 
\beq
\mf{w}_N^{(\pm)}(u) \; = \; \mf{w}^{(\pm)}(\tau_N u ) \qquad \e{with} \qquad 
\left\{ \begin{array}{ccc}   \mf{w}^{(+)}(u  )& = &   \mf{w}(u) + \f{1}{2} \mf{v}_{2\pi \mf{b},0^+}(u)   \\
  \mf{w}^{(-)}(u ) &  =  & - \f{1}{2}\mf{v}_{2\pi \mf{b},0^+}(u)   \end{array} \right. \;. 
\enq
By going to Fourier space, one observes that 
\beq
\mc{E}_{N}^{(-)}[ \sg ] \; = \;  \f{1}{2} \Int{}{} \hspace{-1mm} \dd \la \, \big| \mc{F}[ \sg ](\la)\big|^2  \f{ \sinh( \pi \mf{b} \la ) \cdot \sinh(\pi \hat{\mf{b}} \la )   }{ \la \sinh(\tfrac{\pi}{2}\la)    }  
\; \geq \; 0 \, , 
\enq
where $\mc{F}[ \sg ](\la)$ stands for the Fourier transform of the signed measure $\sg$. Thus, 
\beq
\mc{E}_{N,t}[ \mu_{\e{eq}}^{(N,t)}, \nu_{\e{eq}}^{(N,t)} ] \; \geq \; \mc{E}_{N}^{(+)}\big[t \nu_{\e{eq}}^{(N,t)} + (1-t) \mu_{\e{eq}}^{(N,t)}   \big]  \, \geq \, \mc{E}_{N}^{(+)}\big[ \sg_{\e{eq}}^{(N)} \big]
  \;. 
\label{ecriture borne in sur minimum E n t avec minimum de EN plus}
\enq
  In the last line, we have used that $\mc{E}_{N}^{(+)}$ is lower-continuous, has compact level sets, is strictly convex on $\mc{M}^{1}(\R)$, bounded from below and not identically $+\infty$
so as to ensure the existence of a unique minimiser thereof: 
$\mc{E}_{N}^{(+)}\big[ \sg_{\e{eq}}^{(N)} \big] \; = \;  \e{inf} \Big\{   \mc{E}_{N}^{(+)}\big[ \sg \big] \; : \;   \sg    \in \mc{M}^{1}(\R) \Big\}$.

\subsection{Singular integral equation characterisation of the minimiser $ \sg_{\e{eq}}^{(N)}$}

By using the variational characterisation of the minimiser, see \textit{e.g.} \cite{DeiftOrthPlyAndRandomMatrixRHP}  for an exposition in the $\be$-ensemble case, 
one reduces the construction of $\sg_{\e{eq}}^{(N)}$ to finding a solution to a singular integral equation  on the Sobolev space $H_s(\intff{a_N}{b_N})$ driven by the operator 
\beq
\mc{S}_N\big[ \phi \big](\xi) \; = \;  \Fint{a_N}{b_N} \big( \mf{w}^{(+)}\big)^{\prime}\big[ \tau_N (\xi-\eta) \big]  \cdot   \phi(\eta) \dd \eta \; . 
\enq
Indeed, upon introducing the effective potential subordinate to a function $\phi\in H_s(\intff{a_N}{b_N})$
\beq
V_{N;\e{eff}}[\phi](\xi) \; = \; \f{1}{N} V_N(\xi) \, - \, \Fint{a_N}{b_N} \mf{w}^{(+)}\big[ \tau_N (\xi-\eta) \big]  \cdot   \phi(\eta) \dd \eta 
\label{ecriture potentiel effectif associee a une solution phi}
\enq
one may formulate the 
\begin{prop}
\label{Proposition characterisation qqes ptes mesure eq}
Let $a_N<b_N$ and $\varrho_{\e{eq}}^{(N)}\in H_s(\intff{a_N}{b_N})$, $\tf{1}{2}<s<1$ solve
\beq
\f{ 1 }{ N \tau_N } V_N^{\prime}(x)  \; = \; \mc{S}_N\big[ \varrho_{\e{eq}}^{(N)} \big](x)  \qquad  \; on \quad \intoo{a_N}{b_N} \; , 
\label{ecriture eqn int sing lin pour densite mesure eq}
\enq
be subject to the conditions 
\beq
\varrho_{\e{eq}}^{(N)}(\xi) \geq  0 \quad for \quad \xi \in \intff{a_N}{b_N} \; , \qquad \Int{a_N}{b_N}\varrho_{\e{eq}}^{(N)}(\xi) \dd \xi \; = \; 1 
\label{ecriture masse A densite}
\enq
and 
\beq
V_{N;\e{eff}}[ \varrho_{\e{eq}}^{(N)} ](\xi) \; > \; \e{inf}\big\{  V_{N;\e{eff}}[ \varrho_{\e{eq}}^{(N)} ](\eta) \, : \, \eta \in \R \big\} \quad for \; any \;\; \xi \in \R \setminus \intff{a_N}{b_N} \;. 
\label{ecriture positivite stricte du potentiel effectif}
\enq
Then, the equilibrium measure $\sg_{\e{eq}}^{(N)}$ is supported on the segment $\intff{a_N}{b_N}$ and continuous in respect to Lebesgue's measure with density $ \varrho_{\e{eq}}^{(N)}  $. 
Moreover,  the density takes the form 
\beq
 \varrho_{\e{eq}}^{(N)}(\xi) \, = \, \sqrt{(b_N-\xi)(\xi-a_N)} \cdot h_N(\xi) \quad with \quad h_N \in \mc{C}^{\infty}(\intff{a_N}{b_N}) \,. 
\label{ecriture forme densite mesure equilibre}
\enq

\end{prop}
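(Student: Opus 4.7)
The plan is to deduce the statement from the Euler--Lagrange characterisation of the unique minimiser $\sg_{\e{eq}}^{(N)}$ of $\mc{E}_{N}^{(+)}$ established in the previous step, and then to use the uniqueness of this minimiser to identify it with the measure $\varrho_{\e{eq}}^{(N)}(\xi)\dd\xi$. A standard convex-analysis computation with a Lagrange multiplier $C_N$ enforcing the unit-mass constraint, combined with strict convexity of $\mc{E}_{N}^{(+)}$, shows that $\sg \in \mc{M}^{1}(\R)$ realises the infimum if and only if there exists $C_N \in \R$ such that $V_{N;\e{eff}}[\sg](\xi) = C_N$ on $\supp(\sg)$ and $V_{N;\e{eff}}[\sg](\xi) \geq C_N$ on $\R$. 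The non-trivial content thus reduces to verifying that the ansatz measure satisfies these equilibrium conditions and has the claimed endpoint behaviour.

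Differentiating the definition \eqref{ecriture potentiel effectif associee a une solution phi} once in $\xi$ and using the chain rule, one recognises that the singular integral equation \eqref{ecriture eqn int sing lin pour densite mesure eq} is precisely the statement $\Dp{\xi} V_{N;\e{eff}}[\varrho_{\e{eq}}^{(N)}](\xi)=0$ for $\xi \in \intoo{a_N}{b_N}$. Since $\varrho_{\e{eq}}^{(N)} \in H_s(\intff{a_N}{b_N})$ with $\tf{1}{2}<s<1$, the function $V_{N;\e{eff}}[\varrho_{\e{eq}}^{(N)}]$ is continuous on $\intff{a_N}{b_N}$, so that integrating out produces a constant $C_N$ for which the equilibrium equality $V_{N;\e{eff}}[\varrho_{\e{eq}}^{(N)}](\xi)=C_N$ holds on the full closed segment $\intff{a_N}{b_N}$. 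The strict inequality off $\intff{a_N}{b_N}$ is exactly condition \eqref{ecriture positivite stricte du potentiel effectif}, while \eqref{ecriture masse A densite} provides positivity and unit mass. Uniqueness of the minimiser then forces $\sg_{\e{eq}}^{(N)}=\varrho_{\e{eq}}^{(N)}(\xi)\dd\xi$, both being supported on $\intff{a_N}{b_N}$.

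It remains to establish the structural form \eqref{ecriture forme densite mesure equilibre}. For this I would analyse the inversion of the operator $\mc{S}_N$ on the finite segment $\intff{a_N}{b_N}$. Decompose $\mf{w}^{(+)}$ using \eqref{definition potentiel varpi tot et correctif v alpha eta} into an $L^1$-smooth part plus a Hilbert-type singular contribution whose derivative behaves, near coincident points, as the Cauchy kernel; this identifies $\mc{S}_N$ as a lower-order perturbation of a truncated Cauchy singular operator weighted by a positive Fourier symbol. Classical Tricomi/Riemann--Hilbert theory for such operators on a compact interval then yields a finite-dimensional family of solutions indexed by their possible $(\xi-a_N)^{\pm 1/2}$ and $(b_N-\xi)^{\pm 1/2}$ endpoint behaviour. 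The positivity constraint $\varrho_{\e{eq}}^{(N)}\geq 0$ together with the strict inequality \eqref{ecriture positivite stricte du potentiel effectif} near the endpoints rules out the inverse-square-root singularities, selecting the solution with the $\sqrt{(b_N-\xi)(\xi-a_N)}$ prefactor. Smoothness of the residual factor $h_N$ is then bootstrapped from the smoothness of $V_N$ and of $\mf{w}^{(+)}$ outside the diagonal, using the integral equation to trade regularity of $\varrho_{\e{eq}}^{(N)}$ against regularity of $V_N^{\prime}$.

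The main technical obstacle is the last point: the kernel of $\mc{S}_N$ is an $N$-dependent rescaling $\tau_N(\xi-\eta)$ of a non-elementary function, and it is the principal singular part (rather than a pure Cauchy kernel) that dictates the endpoint asymptotics. Extracting the correct square-root edge behaviour and showing that $h_N \in \mc{C}^{\infty}(\intff{a_N}{b_N})$ in a way consistent with subsequent asymptotic analysis requires the Wiener--Hopf factorisation and Deift--Zhou steepest-descent input alluded to above the statement of the proposition, and is the part that accounts for the bulk of the technical work in \cite{KozConvergenceFFSeriesSinhGordon2ptFcts}.
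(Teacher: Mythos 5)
Your first two paragraphs reproduce the paper's own route: the identification of \eqref{ecriture eqn int sing lin pour densite mesure eq} with $\Dp{\xi}V_{N;\e{eff}}[\varrho_{\e{eq}}^{(N)}]=0$, integration to a constant $C_N$ on $\intff{a_N}{b_N}$, the sufficiency of the equilibrium conditions by convexity, and the appeal to uniqueness of the minimiser of $\mc{E}_N^{(+)}$ (the paper points to the standard $\beta$-ensemble exposition for exactly this step), so that part is fine. Where you genuinely diverge is the local structure \eqref{ecriture forme densite mesure equilibre}. The paper does \emph{not} obtain the $\sqrt{(b_N-\xi)(\xi-a_N)}$ edge behaviour by a Muskhelishvili/Tricomi endpoint-index classification of the dominant Cauchy part of $\mc{S}_N$; it infers the local form from the loop (Schwinger--Dyson) equations of the $N$-fold integral subordinate to $\mc{E}_N^{(+)}$, and then \emph{realises} the density, together with the smoothness of $h_N$, through the explicit Novokshenov-type $2\times 2$ Riemann--Hilbert inversion, i.e.\ $\varrho_{\e{eq}}^{(N)}=\tf{\mc{W}_N[V_N^{\prime}]}{(N\tau_N)}$ with $\mc{W}_N$ given by the contour-integral formula \eqref{forumule explicite pour WN}. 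Each route buys something: yours is more elementary in spirit and makes the square-root exponent transparent as the "bounded solution class" of the airfoil-type dominant equation, but the reduction to "Cauchy plus lower order" is delicate here because the residual part of the convolution kernel $(\mf{w}^{(+)})^{\prime}[\tau_N(\xi-\eta)]$ is neither compact nor uniformly small in $N$ (the $\tau_N$-rescaling shrinks the region where the Cauchy approximation is valid), and your smoothness bootstrap for $h_N$ would be a separate piece of work, whereas the paper reads both the edge exponent and $h_N\in\mc{C}^{\infty}(\intff{a_N}{b_N})$ directly off the explicit $\chi$-based representation. Two minor corrections: positivity of $\varrho_{\e{eq}}^{(N)}$ does not by itself exclude inverse-square-root endpoints (such a profile is positive); what excludes them is the hypothesis $\varrho_{\e{eq}}^{(N)}\in H_s(\intff{a_N}{b_N})$ with $s>\tf{1}{2}$, which is precisely the functional space in which the paper inverts $\mc{S}_N$. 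And note that the characterisation of the image $\mf{X}_s(\R)$ in \eqref{x} plays the role of your "solvability/orthogonality condition", so the two constraints fixing $a_N,b_N$ come out of the inversion itself rather than from an endpoint classification.
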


The above proposition thus provides one with the following strategy for determining the equilibrium measure. One starts by solving 
the singular integral equation \eqref{ecriture eqn int sing lin pour densite mesure eq} for \textit{any} endpoints $a_N$ and $b_N$.  
The inversion should be carried out in an appropriate functional space which is dictated by the local structure \eqref{ecriture forme densite mesure equilibre} of the equilibrium 
measure's density, as can be inferred from an analysis of the systems of loop equations  
associated with the probability measure on $\R^N$ naturally subordinate to the energy functional $\mc{E}_{N}^{(+)}$. 
The fact that $\mc{S}_N$ should be inverted on $H_s(\intff{a_N}{b_N})$, $0<s<1$, imposes a constraint on $a_N$ and $b_N$. A second constraint
is obtained from the fact that the equilibrium measure has mass one \eqref{ecriture masse A densite}. 
This is still not enough so as to be sure that the solution constructed in this way provides one with the equilibrium measure. 
For that to happen, one still needs to verify that the two positivity constraints \eqref{ecriture masse A densite}-\eqref{ecriture positivite stricte du potentiel effectif}
are fulfilled. The realisation of such a program demands to have a thorough control on the inversion of $\mc{S}_N$. 
The latter may be reached within the scheme developed in \cite{NovokshenovSingIntEqnsIntervalGeneral}, by solving an auxiliary $2\times 2$
Riemann--Hilbert problem.

\subsection{The Riemann--Hilbert based inversion of the operator}

In the following, we adopt the shorthand notations
\beq
\ov{a}_N=\tau_N a_N \; , \quad \ov{b}_N=\tau_N b_N \; , \quad \ov{x}_N = \tau_N(b_N-a_N) \;. 
\enq
Consider the Riemann--Hilbert problem for a $2\times 2$ matrix function $\chi\in \mc{M}_2\big( \mc{O}(\Cx\setminus\R) \big)$:
\begin{itemize}

\item $\chi$ has continuous $\pm$-boundary values on  $\R$;
\item there exist constant matrices $\chi^{(a)}$ with  $\chi^{(1)}_{12}\not= 0$ such that  when  $\la \rightarrow \infty$ 
\beq
{\small
\hspace{-1.5cm}
\chi(\la) = \left\{ 
\begin{array}{cc} \mc{P}_{L;\uparrow}(\la) \cdot  \left( \begin{array}{cc} -\mf{s}_{\la} \cdot \ex{ \ii\la \ov{x}_N }  & 1 \\
										-1 &  0     \end{array} \right)  
		      \cdot \f{ \big(-\ii \la \big)^{  \f{3}{2} \sg_3 } }{ \ex{ -\ii\f{3 \pi}{2}\sg_3}  } \cdot 
  \Big(I_{2}    +   \f{\chi^{(1)}}{\la}   +   \f{\chi^{(2)}}{\la^2}   +   \e{O}\big(\la^{-3}\big) \Big) \cdot \op{Q}(\la)  & 
			 \la \in \mathbb{H}^+ \vspace{3mm} \\
\mc{P}_{L;\downarrow}(\la) \cdot  \left( \begin{array}{cc} -1  & \mf{s}_{\la} \cdot \ex{- \ii\la \ov{x}_N }   \\
										0  & 1    \end{array} \right)
			  \cdot \big( \ii \la \big)^{\f{3 }{2} \sg_3 }  \cdot 
		     \Big(I_{2}    +   \f{\chi^{(1)}}{\la}   +   \f{\chi^{(2)}}{\la^2}   +   \e{O}\big(\la^{-3}\big) \Big) \cdot \op{Q}(\la) &
				\la \in \mathbb{H}^-  \end{array}\right. 
}
\nonumber 
\enq

in which the matrix $\op{Q}$ takes the form

\vspace{2mm}
$ \op{Q}(\la) \; = \; \left( \begin{array}{cc}  0   & - \chi^{(1)}_{12}  \\ 
   \Big\{  \chi^{(1)}_{12} \Big\}^{-1}   &   \mf{q}_1 \, + \, \la   \end{array} \right)  \quad \e{with} \quad \mf{q}_1 \, = \, \Big( \chi^{(1)}_{11} \chi^{(1)}_{12} \, - \, \chi^{(2)}_{12}  \Big)\cdot  \Big\{  \chi^{(1)}_{12} \Big\}^{-1}  $  \;;

\item $\chi_+(\la) \; = \;   G_{\chi}(\la) \cdot \chi_-(\la) $ \quad for \quad  $\la \in \R$  where
\beq
G_{\chi}(\la) \; = \; \left( \begin{array}{cc}   \ex{ \ii \la \ov{x}_N }  & 0  \\  
							\frac{ 1 }{  \ii \pi  } \cdot R(\la)    &  -\ex{- \ii \la \ov{x}_N }  \end{array} \right)  \quad \e{with} \quad R(\la) \; = \; 2 \f{ \sinh( \pi \mf{b} \la ) \cdot \sinh(\pi \hat{\mf{b}} \la ) \cdot \sinh(\tfrac{\pi}{2}\la)   }{  \cosh^2(\tfrac{\pi}{2} \la)   }  \;. 							 
\nonumber
\enq
\end{itemize}
Here $\mf{s}_{\la}=\e{sgn}\big( \Re\,\la \big)$, $\mc{O}(A)$ stands for the ring of holomorphic functions on $A$, while the $\e{O}$ remainder appearing in matrix equalities should be understood entrywise. Moreover, 
we point out that the matrix $\op{Q}$ appearing in the asymptotic expansion for $\chi$ is chosen such that $\chi$
has the large-$\la$ behaviour 
\beq
\chi(\la) \, = \, \chi_{\uparrow/\downarrow}^{(\infty)}(\la) \cdot  \big(\mp\ii \la \big)^{  \f{1}{2} \sg_3 }\qquad \la \in \mathbb{H}^{\pm} \;, 
\enq
with $\chi_{\uparrow/\downarrow}^{(\infty)}(\la)$ bounded at $\infty$.

The Deift-Zhou non-linear steepest descent method \cite{DeiftZhouSteepestDescentForOscillatoryRHP,DeiftZhouSteepestDescentForOscillatoryRHPmKdVIntroMethod}
allows one to reduce the above Riemann-Hilbert problem into one that is uniquely solvable by the singular integral equation method of \cite{BealsCoifmanScatteringInFirstOrderSystemsEquivalenceRHPSingIntEqnMention}, 
provided that $N$ is large enough and $b_N-a_N>c>0$ uniformly in $N$. 

The solution $\chi$ then provides one with a full description of the inverse of $\mc{S}_N$.
\begin{prop}
\label{Proposition invertibilite operateur S}

Let $0 < s < 1$. The operator 
$\mc{S}_N \; : \; H_{s}\big( \intff{a_N}{b_N} \big) \longrightarrow H_{s}\big( \R \big)$
is continuous and invertible on its image:
\beq
\label{x}
\mf{X}_{s}\big( \R \big) = \Bigg\{H \in H_{s}(\R)\;\; :\;    \Int{\R+\ii \eps^{\prime} }{} \hspace{-2mm} \chi_{12}(\mu) \, \mc{F}[H](\tau_N\mu)\, \ex{- \ii \mu \ov{b}_N}\cdot\f{\dd\mu}{ (2 \ii \pi)^2 }  \; = \; 0\Bigg\}\;.
\enq 
More specifically, one has the left and right inverse relations 
\beq
\mc{W}_N\circ \mc{S}_N \, = \, \e{id}  \quad on \quad H_s(\intff{a_N}{b_N}) \quad and \quad 
\mc{S}_N\circ \mc{W}_N[H](\xi) \, = \, H(\xi)   \quad a.e. \; on\;  \intff{a_N}{b_N}
\nonumber
\enq
for any $H \in \mf{X}_s(\R)$. The operator $\mc{W}_N\,:\,\mf{X}_{s}(\R) \longrightarrow H_s(\intff{a_N}{b_N})$ is given, whenever it makes sense, as an encased oscillatorily convergent Riemann integral transform
\beq
\label{forumule explicite pour WN}
\mc{W}_N[H](\xi) \; = \; \f{ \tau_N^{2} }{ \pi }  \hspace{-2mm}
\Int{ \R + 2 \ii \eps^{\prime} }{} \hspace{-2mm} \f{ \dd \la }{ 2 \ii \pi } \Int{ \R + \ii \eps^{\prime} }{}\hspace{-2mm} \f{ \dd \mu }{ 2 \ii \pi }\, 
 \ex{- \ii \tau_N\la(\xi-a_N) } W(\la,\mu)
\,\ex{- \ii  \mu \ov{b}_N} \mc{F}[H](\tau_N \mu ) \;, 
\enq
 where $\eps^{\prime}> 0 $ is small enough. 
The integral kernel 
\beq
W(\la,\mu) \, = \, \f{ 1 }{ \mu- \la }
\bigg\{    \f{ \mu }{ \la }  \cdot \chi_{11}(\la) \chi_{12}(\mu) \, -  \, \chi_{11}(\mu) \chi_{12}(\la) \bigg\} \;, 
\label{definition noyau W}
\enq
is expressed in terms of the entries of the matrix  $\chi$. 
\end{prop}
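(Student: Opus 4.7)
The plan is to combine a Fourier-space reformulation of $\mc{S}_N$ with the matrix Riemann--Hilbert construction for $\chi$, following the scheme of \cite{NovokshenovSingIntEqnsIntervalGeneral} adapted to the $N$-dependent symbol at hand. \textbf{Setting up the framework.} After the rescaling $\xi \mapsto \tau_N \xi$, one recasts $\mc{S}_N[\phi]=H$ as a truncated Wiener--Hopf equation on the scaled interval $\intff{\ov{a}_N}{\ov{b}_N}$ with convolution kernel $\big(\mf{w}^{(+)}\big)'$. The Fourier symbol of this kernel coincides, up to a universal multiplicative constant, with $R(\la)/(\ii\la)$, which is precisely the datum entering the jump matrix $G_\chi$. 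The continuity of $\mc{S}_N$ from $H_s(\intff{a_N}{b_N})$ into $H_s(\R)$ then follows from the boundedness of $R$ on $\R$, itself a consequence of the exact cancellation $\mf{b}+\hat{\mf{b}}=\tfrac{1}{2}$ between the $\sinh$ factors at infinity.

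\textbf{RHP encoding of the inverse.} The truncated Wiener--Hopf problem with symbol $R$ is equivalent to the jump relation $\chi_+ = G_\chi \chi_-$: the two columns of $\chi$ yield complementary pieces of any decomposition of a datum in $H_s(\R)$ into the range of $\mc{S}_N$ and its cokernel. Unique solvability of the RHP for $\chi$ is provided by the Deift--Zhou non-linear steepest descent analysis invoked above, valid once $b_N-a_N>c>0$ and $N$ is large enough; the prescribed large-$\la$ asymptotics fix unambiguously the entries $\chi_{11}$ and $\chi_{12}$ appearing in the kernel $W$.

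\textbf{Explicit formula and mapping properties.} With $\chi$ in hand, one posits the iterated integral representation \eqref{forumule explicite pour WN} for $\mc{W}_N$. The codimension-one orthogonality condition defining $\mf{X}_{s}(\R)$ is precisely what ensures that the $\la$-contour in \eqref{forumule explicite pour WN} may be closed in the upper half-plane without collecting a residue at the apparent pole of $W(\la,\mu)$ at $\la=0$, thereby guaranteeing well-definedness of $\mc{W}_N[H]$. That $\mc{W}_N[H]$ is supported in $\intff{a_N}{b_N}$ follows, for $\xi\notin \intff{a_N}{b_N}$, from contour deformation of the $\la$-integral, exploiting the large-$\la$ decay of $\chi_{11}\chi_{12}$ together with the sign of the exponential factor $\ex{-\ii\tau_N\la(\xi-a_N)}$; the Sobolev regularity is then read off from the growth of $\mc{F}[H]$ and of $\chi$ at infinity.

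\textbf{Verification of the inverse relations.} The identities $\mc{W}_N\circ\mc{S}_N=\e{id}$ and $\mc{S}_N\circ\mc{W}_N=\e{id}$ are established by inserting the Fourier representation of $\mc{S}_N$ into \eqref{forumule explicite pour WN}, interchanging orders of integration, and reducing the resulting expressions by means of the jump $\chi_+=G_\chi\chi_-$ and Cauchy's theorem; the simple pole of $W(\la,\mu)$ at $\mu=\la$ produces the Dirac mass responsible for the identity operator. The main obstacle is the control of the iterated oscillatory integrals in \eqref{forumule explicite pour WN}, which are only conditionally convergent, and the tracking of the delicate cancellations between the asymptotic tails of $\chi_{11}(\la)\chi_{12}(\mu)$ and $\chi_{11}(\mu)\chi_{12}(\la)$ dictated by the structure of $\op{Q}(\la)$; these cancellations are exactly what enforce that $\mc{W}_N[H]$ lies in $H_s(\intff{a_N}{b_N})$ and that $\mc{S}_N$ establishes a bijection from $H_s(\intff{a_N}{b_N})$ onto $\mf{X}_{s}(\R)$.
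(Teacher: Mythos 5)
Your proposal follows exactly the route the paper itself indicates for this proposition: recast $\mc{S}_N$ as a finite-interval convolution operator whose symbol is essentially $R$, encode its inversion in the $2\times 2$ Riemann--Hilbert problem for $\chi$ in the manner of \cite{NovokshenovSingIntEqnsIntervalGeneral}, obtain solvability from the Deift--Zhou analysis, and read off $\mc{W}_N$ and the codimension-one image condition from the kernel $W$ built out of $\chi_{11},\chi_{12}$ --- and your identification of the $\mf{X}_s$-constraint as the cancellation of the $\la=0$ pole of $W$ (whose residue is $\chi_{11}(0)\chi_{12}(\mu)$, so the factors of $\mu$ indeed drop out and reproduce the integral in \eqref{x}) is the correct mechanism, up to the minor slip that this pole is collected when the $\la$-contour is closed \emph{downward}, for $\xi>a_N$. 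The paper contains no proof of this statement beyond that setup, deferring all the substantive verifications (convergence of the encased oscillatory integrals, the mapping properties, the two inverse identities) to \cite{KozConvergenceFFSeriesSinhGordon2ptFcts}, and your outline defers precisely the same steps, so at the level of detail available here the two approaches coincide.
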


These pieces of information along with the explicit, uniform on $\Cx$, large-$N$ expansion of the solution $\chi$  to the above Riemann--Hilbert problem 
and several technical estimates which allow one to check that \eqref{ecriture masse A densite}-\eqref{ecriture positivite stricte du potentiel effectif} hold,
allow one to formulate the

\begin{theorem}
\label{Theorem caracterisation minimiseur fnelle EN +}
 Let $N\geq N_0$ with $N_0$ large enough. Then the unique minimiser $\sg_{\e{eq}}^{(N)}$ of the functional $\mc{E}_N^{(+)}$ 
 introduced in \eqref{definition fonctionnelle energie +} is absolutely continuous in respect to the Lebesgue measure with density $\varrho_{\e{eq}}^{(N)}$
 and is supported on the segment  $\intff{a_N}{b_N}$. The endpoints are the unique solutions to the equations 
\beq
a_N+b_N=0 \qquad and \qquad    \vartheta  \cdot \f{    ( \ov{b}_N)^2   \, \ex{ \ov{b}_N }  }{    N  } \cdot \mf{t}( 2 \ov{b}_N) \cdot
\Big\{ 1   \, + \, \e{O}\Big(  (\ov{b}_N)^5 \ex{ -2\ov{b}_N(1-\eps)  }   \Big) \Big\} \, = \, 1 \;, 
\nonumber
\enq
for any  $1>\eps>0$  and the remainder is smooth and differentiable in $\ov{b}_N$. Above, one has 
\beq
\vartheta \, = \, \f{ 2 \, \varkappa }{ 3  (2\pi)^{ \f{5}{2} } } \cdot \f{ \Ga\big( \mf{b}, \hat{\mf{b}} \big) }{ \mf{b}^{\mf{b}} \, \hat{\mf{b}}^{\hat{\mf{b}}} }\; , 
\nonumber
\enq
while, upon using the constants $w_k$ introduced below in \eqref{definition des constantes wk et de leur asymptotiques},  
\beq
\mf{t}( \ov{x}_N) \, = \,  \f{ 6 }{ (\ov{x}_N)^2 } \Big\{  2  \, + \, w_2  \, - \, w_1 \, -\, \f{w_1 w_3}{ w_2 }  \Big\} \; \widesim{  \ov{x}_N \rightarrow + \infty } \; 1 \, + \, \e{O}\Big( \tfrac{ 1 }{ \ov{x}_N } \Big) \;. 
\label{definition polynome frak t}
\enq
In particular,  $\ov{b}_N$ is uniformly away from zero and admits the large-$N$ expansion 
\beq
\ov{b}_N\, = \, \ln N  \,-  \, 2  \ln \ln N  \, - \,  \ln \vartheta   \; + \; \e{O}\Big( \f{ \ln \ln N }{ \ln N }  \Big) \;. 
\nonumber
\enq

 Finally, the density $\varrho_{\e{eq}}^{(N)}$ of the equilibrium measure  
 is expressed in terms of the integral transform of the potential $\varrho_{\e{eq}}^{(N)} \, = \, \tf{ \mc{W}_N[V^{\prime}_N] }{ (N \tau_N) }$. 
 
\end{theorem}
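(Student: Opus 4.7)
The plan is to combine the abstract variational characterisation of $\sg_{\e{eq}}^{(N)}$ supplied by Proposition \ref{Proposition characterisation qqes ptes mesure eq} with the explicit inversion formula for $\mc{S}_N$ provided by Proposition \ref{Proposition invertibilite operateur S}. Concretely, any candidate density $\varrho_{\e{eq}}^{(N)}$ supported on a segment $\intff{a_N}{b_N}$ must solve the singular integral equation \eqref{ecriture eqn int sing lin pour densite mesure eq}, be non-negative, have unit mass, and produce an effective potential $V_{N;\e{eff}}$ that is strictly minimised on the support. Inverting $\mc{S}_N$ on $H_s\pa{\intff{a_N}{b_N}}$ requires that $V_N^{\prime} \in \mf{X}_s(\R)$, which imposes one compatibility constraint on the pair $(a_N,b_N)$. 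Since the driving potential $V_N(\la)=\varkappa\cosh(\tau_N\la)$ is even, symmetry of the problem forces $a_N+b_N=0$, and I would first verify that this symmetry requirement coincides with the vanishing of the constraint encoded by the functional defining $\mf{X}_s(\R)$ in \eqref{x}.

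Once $a_N=-b_N$ is established, the formula $\varrho_{\e{eq}}^{(N)}=\mc{W}_N[V_N^{\prime}]/(N\tau_N)$ fully determines the candidate density. The mass condition $\int_{a_N}^{b_N}\varrho_{\e{eq}}^{(N)}(\xi)\dd\xi=1$ then furnishes a second, transcendental, equation in the single remaining unknown $\ov{b}_N$. Extracting this equation amounts to computing the integral transform of $V_N^{\prime}$ against the kernel $W(\la,\mu)$ from \eqref{definition noyau W}. Since $V_N^{\prime}(\xi)=\varkappa\tau_N\sinh(\tau_N\xi)$ and $\ov{b}_N$ will turn out to be of order $\ln N$, the Fourier transform of $V_N^{\prime}$ concentrates on the exponential $\ex{\ov{b}_N}$, with subleading corrections controlled by $\ex{-2\ov{b}_N(1-\eps)}$; extracting these leading terms produces the factor $\vartheta \cdot (\ov{b}_N)^2 \ex{\ov{b}_N}/N \cdot \mf{t}(2\ov{b}_N)$ in the statement.

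The technically demanding step is the uniform-in-$\la$, large-$N$ asymptotic analysis of the matrix $\chi$ entering $W(\la,\mu)$, since every constant in the transcendental equation is ultimately read off from it. This is carried out by the Deift--Zhou non-linear steepest descent method: I would open lenses off $\R$ to absorb the oscillating factors $\ex{\pm\ii\la\ov{x}_N}$, construct local parametrices near $\pm\ov{b}_N$ compatible with the square-root vanishing \eqref{ecriture forme densite mesure equilibre}, and match against a global outer parametrix. The Wiener--Hopf factorisation of the jump $R(\la)$, which exposes the $\sinh(\pi\mf{b}\la)\sinh(\pi\hat{\mf{b}}\la)$ structure, produces the Barnes-type combination $\Ga(\mf{b},\hat{\mf{b}})/(\mf{b}^{\mf{b}}\hat{\mf{b}}^{\hat{\mf{b}}})$ in $\vartheta$. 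Tracking the subleading constants $w_k$ encoding the finite-interval corrections of this factorisation then delivers $\mf{t}(\ov{x}_N)$ in the form \eqref{definition polynome frak t}, with the claimed $1+\e{O}(\ov{x}_N^{-1})$ behaviour at infinity.

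Finally, the transcendental equation for $\ov{b}_N$ is solved asymptotically by the dominant balance $\ex{\ov{b}_N}(\ov{b}_N)^2\sim N/\vartheta$, giving $\ov{b}_N=\ln N-2\ln\ln N-\ln\vartheta+o(1)$, with the refined $\e{O}(\ln\ln N/\ln N)$ remainder produced by one further bootstrap. The two positivity constraints \eqref{ecriture masse A densite}--\eqref{ecriture positivite stricte du potentiel effectif} are then checked directly from the explicit form of $\mc{W}_N[V_N^{\prime}]$ together with the uniform asymptotics of $\chi$, after which Proposition \ref{Proposition characterisation qqes ptes mesure eq} promotes the constructed candidate to the genuine equilibrium measure and identifies $\varrho_{\e{eq}}^{(N)}=\mc{W}_N[V_N^{\prime}]/(N\tau_N)$. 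The main obstacle is the bookkeeping of this uniform-in-$N$ Riemann--Hilbert asymptotics: the exponentially small but $a_N,b_N$-dependent corrections are precisely what fix the coefficient $\vartheta$ and the polynomial $\mf{t}$, and hence the leading order of $\ov{b}_N$, so any imprecision at the parametrix-matching stage would destroy the delicate asymptotic cancellation needed in Theorem \ref{Theorem borne sup sur fct partition}.
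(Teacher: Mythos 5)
Your proposal follows essentially the same route as the paper: the variational characterisation of Proposition \ref{Proposition characterisation qqes ptes mesure eq}, the explicit inversion $\varrho_{\e{eq}}^{(N)}=\mc{W}_N[V_N^{\prime}]/(N\tau_N)$ from Proposition \ref{Proposition invertibilite operateur S} with the membership constraint in $\mf{X}_s(\R)$ and the unit-mass condition fixing $(a_N,b_N)$, the Deift--Zhou asymptotics of $\chi$ supplying $\vartheta$ and $\mf{t}$, and a final verification of the positivity conditions \eqref{ecriture masse A densite}--\eqref{ecriture positivite stricte du potentiel effectif}. This is exactly the strategy the paper attributes to \cite{KozConvergenceFFSeriesSinhGordon2ptFcts}, so no further comparison is needed.
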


In the statement of the theorem, we made use of the coefficients $w_k$ arising in the $\la \rightarrow 0$ expansion below 
\beq
 2 \ii   \f{  \mf{b}^{  2  \ii \mf{b} \la}   \hat{\mf{b}}^{  2 \ii \hat{\mf{b}} \la}    2^{ \ii \la }    }{ \la^3 \, \mf{b}  \hat{\mf{b}}   \ex{   \ii \la \ov{x}_N } }       
\Ga^2 \! \left( \hspace{-1mm}\begin{array}{c}  \tfrac{1}{2} + \ii \tfrac{\la}{2}  \vspace{1mm} \\ \tfrac{1}{2} - \ii \tfrac{\la}{2}    \end{array} \right)  
\Ga\! \left( \hspace{-1mm} \begin{array}{c}  1-\ii \mf{b} \la   , 1-\ii \hat{\mf{b}} \la   ,  1-\ii \tfrac{\la}{2}  \vspace{1mm} \\
 \ii \mf{b} \la \, , \ii \hat{\mf{b}} \la \, ,  \ii \tfrac{\la}{2}  \end{array} \right)  
  =  \sul{\ell=0}{3} \f{(-\ii)^{\ell} w_{\ell}  }{ \la^{3-\ell} } \; + \; \e{O}\big( \la \big)  \;.
\label{definition des constantes wk et de leur asymptotiques}
\enq

\subsection{Estimation of the minimum}
 
The closed expression for $\sg_{\e{eq}}^{(N)}$ in terms of the solution $\chi$ to the above Riemann--Hilbert problem and the close relation between the 
two-body interaction in the potential and the $\mc{S}_{N}$ operator's kernel allow one to exploit the system of jumps for $\chi$
so as to recast $\mc{E}_{N}^{(+)}\big[ \sg_{\e{eq}}^{(N)} \big]$ only in terms of $N$, $\ov{b}_N$ and $\chi$ evaluated at special points:
\begin{multline}
\mc{E}_{N}^{(+)}\big[ \sg_{\e{eq}}^{(N)} \big] \; = \;   \f{\varkappa }{ 2 N } \cosh(\ov{b}_N) \; + \;  \f{ \varkappa^2 \ex{2\ov{b}_N }  }{ 8 \pi  N^2 } 
 \Big\{  \chi_{12}^2(\ii)\, +  \, 2  \big[  \chi_{12}( \ii) \chi_{11}^{\prime}(\ii)  \, - \,  \chi_{11}( \ii) \chi_{12}^{\prime}(\ii)  \big]  \Big\}   \nonumber \\
 - \f{ \varkappa \ex{\ov{b}_N }  }{ 4   N } 
\Big\{ 1 \, + \,  \ex{-\ov{x}_N } \, + \, \chi_{22;-}(0) \big[ 2 \chi_{11}(\ii)+\ii \chi_{12}(\ii) \big] \, - \, 2 \chi_{21;-}(0)  \chi_{12}(\ii) \Big\}  \;. 
\end{multline}
Once that one arrives to the above closed expression, it is a matter of direct calculations which build on the uniform on $\Cx$ large-$N$ asymptotic expansion 
for $\chi$ provided by the non-linear steepest descent so as to infer the large-$N$ asymptotics

\begin{prop}
 
 One has the large-$N$ asymptotic behaviour 
\beq
\mc{E}_N^{(+)}\big[ \sg_{\e{eq}}^{(N)}\big] \; = \;  \f{ 3 \pi^{4} \, \mf{b} \,  \hat{\mf{b}}  \, \tilde{w}_1 }{ 4 ( \ov{b}_N)^3\, \tilde{w}_2   \, \mf{t}\big(2\ov{b}_N\big)  } \, + \, 
\f{ 9 \, \pi^4 \, \mf{b} \,  \hat{\mf{b}} }{ 8 ( \ov{b}_N)^4 \, \mf{t}^2\big( 2\ov{b}_N \big)  } 
\bigg\{    1  - \tfrac{ 2 \tilde{w}_1 }{ \ov{b}_N  \tilde{w}_2 }    \bigg\}  
\, + \,  
\e{O}\Big( \ex{-2\ov{b}_N(1-\eps )}\Big) \;, 
\enq
where $\mf{t}$ is as introduced in Theorem \ref{Theorem caracterisation minimiseur fnelle EN +} and we have rescaled the $w_k$ variables:
\beq
w_1\, =\, 2 \ov{b}_N \tilde{w}_1\; , \quad w_2 \, = \, 2 (\ov{b}_N)^2\,  \tilde{w}_2\; , \quad with \quad 
\tilde{w}_k \,=\, 1+ \e{O}\Big( \tfrac{ 1 }{ \ov{b}_N } \Big) \; \quad as \; \; N \rightarrow + \infty \;. 
\enq

\end{prop}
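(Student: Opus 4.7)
The starting point is the closed expression for $\mc{E}_N^{(+)}\big[ \sg_{\e{eq}}^{(N)}\big]$ displayed right before the statement, which depends only on $N$, $\varkappa$, $\ov{b}_N$ and on the values of a few entries of $\chi$ (and of their first derivatives) at the two special points $\la=\ii$ and $\la=0$. My plan is to feed this identity with the uniform-on-$\Cx$ large-$N$ asymptotic expansion of the solution $\chi$ to the $2\times 2$ Riemann--Hilbert problem -- produced by the Deift--Zhou non-linear steepest descent analysis that was already invoked in Proposition \ref{Proposition invertibilite operateur S} and Theorem \ref{Theorem caracterisation minimiseur fnelle EN +} -- and then to close the system by using the constraint equation for $\ov{b}_N$.

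First, I would unpack the asymptotic hierarchy. The Deift--Zhou reduction produces a global parametrix $\chi_{\e{as}}$ built out of elementary functions in $\mf{b}$, $\hat{\mf{b}}$, $\ov{b}_N$ (gamma-like prefactors already present in \eqref{definition des constantes wk et de leur asymptotiques}) and a small error of order $\ex{-2\ov{b}_N(1-\eps)}$ uniformly on $\Cx$. Evaluated at $\la=\ii$ and $\la=0$ this gives closed expressions for $\chi_{12}(\ii)$, $\chi_{11}(\ii)$, $\chi_{12}^{\prime}(\ii)$, $\chi_{11}^{\prime}(\ii)$, $\chi_{22;-}(0)$, $\chi_{21;-}(0)$ in terms of elementary functions and of the coefficients $w_1,w_2,w_3$ of \eqref{definition des constantes wk et de leur asymptotiques}. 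After inserting these into the closed expression, two types of simplifications are expected: the bracketed combinations in the $\varkappa^2\ex{2\ov{b}_N}/N^2$ block should collapse -- via the jump relation for $\chi$ on $\R$ -- into a compact expression involving only $\tilde{w}_1$, $\tilde{w}_2$ and $\mf{t}(2\ov{b}_N)$; the bracket in the $\varkappa\ex{\ov{b}_N}/N$ block should, after use of the same jump identity, reduce so that only the combination $1+\ex{-\ov{x}_N}$ plus lower order $\chi$-dependent pieces survive.

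Second, I would eliminate $N$ and $\varkappa$ in favor of $\ov{b}_N$ by plugging in the endpoint equation
\[
\vartheta \cdot \f{ (\ov{b}_N)^2 \, \ex{\ov{b}_N} }{ N } \cdot \mf{t}(2\ov{b}_N) \cdot \big\{ 1 + \e{O}\big((\ov{b}_N)^5\ex{-2\ov{b}_N(1-\eps)}\big)\big\} \, = \, 1
\]
coming from Theorem \ref{Theorem caracterisation minimiseur fnelle EN +}. This substitution converts the prefactors $\varkappa\ex{\ov{b}_N}/N$ and $\varkappa^2\ex{2\ov{b}_N}/N^2$ into pure functions of $\ov{b}_N$ with an overall $1/(\ov{b}_N)^3$ and $1/(\ov{b}_N)^4$ scaling, respectively. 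The rescaling $w_1=2\ov{b}_N\tilde{w}_1$, $w_2=2(\ov{b}_N)^2\tilde{w}_2$ then straightens out the arithmetic of the constants: the leading $\varkappa/N$ term produces the $\tilde{w}_1/\tilde{w}_2$ contribution at order $(\ov{b}_N)^{-3}$, while the $\varkappa^2\ex{2\ov{b}_N}/N^2$ term and the subleading part of the $\varkappa\ex{\ov{b}_N}/N$ term combine to give the $(\ov{b}_N)^{-4}$ contribution with the coefficient $1-2\tilde{w}_1/(\ov{b}_N\tilde{w}_2)$. The exponentially small $\ex{-\ov{x}_N}$ term, together with the remainders coming from $\chi-\chi_{\e{as}}$ and from the endpoint equation, merges into the $\e{O}\big(\ex{-2\ov{b}_N(1-\eps)}\big)$ error.

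The main obstacle I anticipate is the bookkeeping of cancellations in the first step. The closed energy expression is not manifestly of the predicted size $\e{O}\big((\ov{b}_N)^{-3}\big)$: after multiplication by the large factors $\varkappa\ex{\ov{b}_N}/N$ and $\varkappa^2\ex{2\ov{b}_N}/N^2$, many terms in the $\chi$-expansion are individually of order $(\ov{b}_N)^{-1}$ or $(\ov{b}_N)^{-2}$ and must cancel exactly to produce the claimed leading behavior. Organising these cancellations cleanly -- typically by repeatedly using the jump relation $\chi_+=G_{\chi}\chi_-$ together with the normalisation fixed by the matrix $\op{Q}$ -- rather than by brute force Taylor expansion of each entry, is the delicate combinatorial step. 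Once this algebraic collapse is carried out, what remains is the routine propagation of error estimates provided by the Deift--Zhou scheme, and the final form of the asymptotics follows.
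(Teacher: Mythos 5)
Your proposal follows essentially the same route as the paper: starting from the closed expression for $\mc{E}_N^{(+)}\big[\sg_{\e{eq}}^{(N)}\big]$ in terms of $\chi$ at $\la=\ii$ and $\la=0$, substituting the uniform large-$N$ parametrix from the Deift--Zhou analysis, eliminating $\varkappa\ex{\ov{b}_N}/N$ via the endpoint equation of Theorem \ref{Theorem caracterisation minimiseur fnelle EN +}, and tracking the cancellations (including the one between $\tfrac{\varkappa}{2N}\cosh(\ov{b}_N)$ and the leading part of the last bracket) that bring the result down to order $(\ov{b}_N)^{-3}$. This matches the paper's argument, which likewise reduces the claim to these direct calculations.
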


Together with Propositions \ref{Proposition estimation comportement gd N Z N p majorant}-\ref{Proposition minimiseur unique pour mathcal E de N et t} 
and the lower bound in \eqref{ecriture borne in sur minimum E n t avec minimum de EN plus}, the above theorem yields Theorem \ref{Theorem borne sup sur fct partition}.

\section{Conclusion}

In this paper we reviewed the bootstrap program approach to the rigorous construction of 1+1 dimensional integrable quantum field theories arising as appropriate quantisations
of integrable classical evolution equations of 1+1 dimensional field theory. This was done on the example of the Sinh-Gordon quantum field theory which is 
the simplest and non-trivial  instance of such model. The approach starts by proposing an appropriate Hilbert space on which such a model is realised. Then, it 
produces the form of the $\op{S}$-matrix which governs the scattering in such a case. This $\op{S}$-matrix arises as a solution of certain symmetry constrains 
on the scattering in a relativistically invariant theory along with the requirement of the factorisability of scattering into a concatenation of two-particle processes. 
Then, the quantum fields, which are operator valued distributions on functions 
of the space-time variables, are constructed as integral operators whose integral kernels satisfy a set of equations, the bootstrap program axioms $\mathrm{i)-v)}$, 
which should be taken as the basic axioms of the theory. These axiomatic equations strongly depend on the form of the $\op{S}$-matrix for the given theory. 
It turns out that the bootstrap program equations can be solved explicitly with the help of the algebraic setting provided by the quantum integrability of the model
and, in particular, the Yang-Baxter equation satisfied by the $\op{S}$-matrix. Once one ends up with the set of explicit solutions to $\mathrm{i)-v)}$, it remains to check the consistency of the
whole construction, in particular, that the so-constructed quantum fields do form an algebra and that they commute at space-like separations. 
The latter requirement is crucial for guaranteeing the causality of the so-constructed theory and thus it being viable as a \textit{per se} quantum field theory. 
To check these last steps of the construction, one must show that the series of multiple integrals resulting from the integral operator's multiplications do converge. 
This was a long standing open question in this field and its solution \cite{KozConvergenceFFSeriesSinhGordon2ptFcts}, in the simplest case scenario, was discussed by the author in the last section of this paper. 

There are still numerous open questions related to these topic. First of all, to implement the method of \cite{KozConvergenceFFSeriesSinhGordon2ptFcts} for establishing the 
convergence of form factor expansions for time-like separated two-point functions in the model  as well as for general multi-point correlation functions regardless of the space or time-like intervals between the operators. These questions do definitely seem to 
be manageable within a finite time. Further, to extend the methods of proving the convergence to more challenging but also more physically relevant  models such as the 
1+1 dimensional integrable Sine-Gordon quantum field theory. There, the multitude of asymptotic particles along with the presence of bound states and equal mass 
asymptotic particles will definitely be a challenging, but hopefully surmountable task. 

Last but not least, one should provide a thorough description of the correlation functions in the infrared limit, \textit{viz}. when the Minkowski separation
between the operator approaches zero. In the case of the two-point function given in \eqref{ecriture developpement serie fct 2 pts} that would correspond to extracting the $r \rightarrow 0^+$ limit.

\section*{Acknowledgment}

I thank M. Karowski, M. Lashkevich and F. Smirnov for stimulating discussions
on various aspects of integrable quantum field theories.

The author is supported by CNRS and also acknowledges support from  ERC Project LDRAM: ERC-2019-ADG Project 884584. This material is based upon work supported by the National Science Foundation
under Grant No. DMS-1928930 while the author participated in a program hosted 
by the Mathematical Sciences Research Institute in Berkeley, California, during 
the Fall 2021 semester.
 







\end{document}